\documentclass[11pt]{article}

\usepackage{todonotes}
\usepackage{tikz}
\usepackage{tkz-euclide}
\usepackage{algorithm}
\usepackage[noend]{algpseudocode}
\usepackage{authblk}
\usepackage{amsmath,amssymb,amsthm}

\usepackage{hyperref}
\usepackage{thmtools}
\hypersetup{colorlinks=true,linkcolor=blue,filecolor=blue,citecolor=blue,urlcolor=blue}
\declaretheorem[name=Theorem]{theorem}
\declaretheorem[name=Lemma, sibling=theorem]{lemma}
\declaretheorem[name=Proposition, sibling=theorem]{proposition}
\declaretheorem[name=Definition, sibling=theorem]{definition}
\declaretheorem[name=Corollary, sibling=theorem]{corollary}

\declaretheorem[name=Example, sibling=theorem]{example}

\newcommand{\G}{\mathcal{G}}

\title{Designing sparse temporal graphs satisfying connectivity requirements} 

\author[1]{Thomas Bellitto}

\author[1]{Jules Bouton Popper}

\author[1,2]{Justine Cauvi}

\author[1]{Bruno Escoffier}

\author[1]{Raphaëlle Maistre-Matus}

\affil[1]{Sorbonne Universit\'e, CNRS, LIP6, F-75005 Paris, France}

\affil[2]{\'Ecole Normale Supérieure de Lyon, Lyon, France}

\date{}

\begin{document}

\maketitle

\begin{abstract}
Connectivity of temporal graphs has been widely studied both as graph theory and as gossip theory. In particular, it is well known that in order to connect every vertex to every other, a temporal graph needs to have at least $2n-4$ edges where $n$ is the number of vertices. This paper investigates the optimal number of edges required to satisfy partial connectivity requirements. We introduce the problem of Connectivity Request Satisfaction where we are given a directed graph that we call the request graph, where an arc from $u$ to $v$ means that we need to be able to go from $u$ to $v$. Our goal is to build a temporal graph on the same vertex set with as few temporal edges as possible that would satisfy all the requests. When the graph we build is directed, we prove that the number of temporal arcs required is $n-\mathrm{cc}+\mathrm{dfvs}$ where $\mathrm{cc}$ is the number of connected component of the request graph and $\mathrm{dfvs}$ is the size of its smallest directed feedback vertex set. It follows that the problem is NP-complete but inherits fixed parameter tractability properties of Directed Feedback Vertex Set. When the graph we build is undirected, we establish a characterization of strongly connected request graphs that admit a solution with $n-1$ edges: it is possible if and only if any set of pairwise non-vertex-disjoint closed walks all share a common vertex. We prove that this criteria can be tested in polynomial time.
 
\end{abstract}

\section{Introduction}
\label{sec:introduction}

A temporal graph is a powerful tool to model and analyse complex real-world networks such as public transit networks, phone call networks, social networks or biological networks. A temporal graph allows for edges to change over time. More precisely, it is defined as a graph with each edge being equipped with a set of appearance times. Connectivity in this context is defined by journeys, that is a walk where edges are traversed one after the other in time. (Shortest) paths, connectivity and exploration problems have been at the core of the study of temporal graphs ~\cite{bellitto2026,buixuan2003,bumpus2023,dreyfus1969,kempe2002,othon2016,wu2016}. While some notions and algorithms quite easily transfer from static to temporal graphs, some other notions (such as spanning trees) and problems turn out to be much more complex in temporal graphs. Among those, a famous result in temporal graphs concerns the number of edges in a connected temporal graphs. In static graphs, minimal (in term of number of edges) graphs that connect a set of $n$ vertices are trees, which have $n-1$ edges. In temporal graphs, the question is more complex: how many (temporal) edges do we need to connect a set of $n$ vertices? This problem has been introduced in the literature as the {\it gossip problem} in the context of optimal scheduling of phone calls (see e.g.~\cite{Hedetniemi1988} for a survey on gossiping), and it is known that for $n\geq 4$ the minimum number of edges is $2n-4$ (see Figure~\ref{fig:2n-4} for an illustration)~\cite{Bumby1981,Hajnal1972}. Most of the works in that field consider that at most one phone call can be done between two persons. It was shown in~\cite{Gobel1991} that deciding if a given graph is label-connected, that is every edge can be given an appearance time such that there is a journey between every pair of vertices, is NP-complete and a characterization of label-connected graphs with $2n-4$ edges was proposed. 

\begin{figure}[t]
\begin{center}
\begin{tikzpicture}[scale = 2]
			\begin{scope}

                \draw[black, fill = black] (0,0) circle (.03);

                \draw[black, fill = black] (1,0) circle (.03);
                
                \draw[black, fill = black] (1.4,0) circle (.01);
                \draw[black, fill = black] (1.5,0) circle (.01);
                \draw[black, fill = black] (1.6,0) circle (.01);
                
                \draw[black, fill = black] (2,0) circle (.03);

                \draw[black, fill = black] (3,0) circle (.03);
                
                \draw[black, fill = black] (4,0) circle (.03);

                \draw[black, fill = black] (5.5,0) circle (.03);
                
                \draw[black, fill = black] (4.75,0.75) circle (.03);

                \draw[black, fill = black] (4.75,-0.75) circle (.03);
                
                \draw [black, -] plot [smooth] coordinates {(4.75,0.75) (4,0.5) (3,0)};
                
                \draw [black, -] plot [smooth] coordinates {(4.75,0.75) (3,0.5) (2,0)};
                
                \draw [black, -] plot [smooth] coordinates {(4.75,0.75) (2,0.5) (1,0)};
                
                \draw [black, -] plot [smooth] coordinates {(4.75,0.75) (1,0.5) (0,0)};

                \draw[-] (0,0) to (1,0);
                
                \draw[-] (2,0) to (4,0);
                
                \draw[-] (4.75,-0.75) to (4,0);
                
                \draw[-] (4.75,0.75) to (4,0);
                
                \draw[-] (4.75,-0.75) to (5.5,0);
                
                \draw[-] (4.75,0.75) to (5.5,0);

                \tkzDefPoint(0.5,0){0}
				\tkzLabelPoint[below](0){$1$}
				
				\tkzDefPoint(2.5,0){0}
				\tkzLabelPoint[below](0){$n-5$}
				
				\tkzDefPoint(3.5,0){0}
				\tkzLabelPoint[below](0){$n-4$}
				
				\tkzDefPoint(4.35,-0.38){0}
				\tkzLabelPoint[left](0){$n-1$}
				
				\tkzDefPoint(4.35,0.38){0}
				\tkzLabelPoint[left](0){$n-3$}
				
				\tkzDefPoint(5.2,-0.38){0}
				\tkzLabelPoint[right](0){$n-2$}
				
				\tkzDefPoint(5.2,0.38){0}
				\tkzLabelPoint[right](0){$n$}
				
				\tkzDefPoint(3.2,0.2){0}
				\tkzLabelPoint[above](0){$n+1$}
				
				\tkzDefPoint(2.2,0.2){0}
				\tkzLabelPoint[above](0){$n+2$}
				
				\tkzDefPoint(1.15,0.2){0}
				\tkzLabelPoint[above](0){$2n-5$}
				
				\tkzDefPoint(0.1,0.2){0}
				\tkzLabelPoint[above](0){$2n-4$}

	\end{scope}
\end{tikzpicture}
\end{center}
\caption{A solution to the gossip problem with $2n-4$ edges. A label next to an edge corresponds to the appearance time of the edge. Every vertex can reach every other vertex by a path that uses edges with increasing appearance time.}
\label{fig:2n-4}
\end{figure}
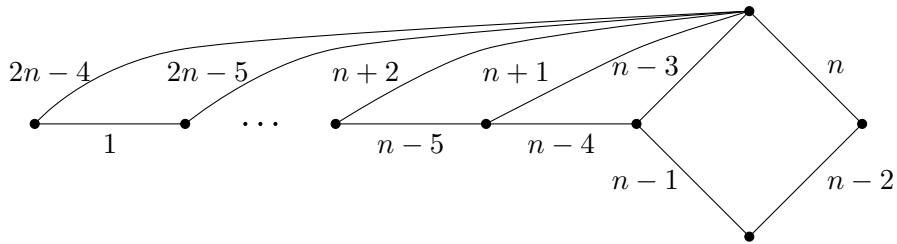

However, if we only need to connect some vertices of the graph to some others, we would expect to be able to do it at a smaller cost. Still, while this can be relevant for many practical applications of temporal graphs, very little is known about it. We thus introduce a generalization of the gossip problem where we do not require full connectivity of the graph but instead we are given as input a set of ordered pairs $(u,v)$ that we have to connect. These connectivity requests can be seen as a directed graphs where an arc $(u,v)$ means that we need to be able to go from $u$ to $v$ in the temporal graph $\mathcal G$ that we create. Hence, the gossip problem is the case where the input is a complete graph.\\

\noindent {\bf Contributions}

As explained above, given a directed graph $R=(V,A)$ called the request graph, we want to design a temporal graph $\mathcal{G}$ on the same set of vertices, such that for every $(u,v)$ in $A$ there is a journey from $u$ to $v$ in $\mathcal{G}$. We first consider the directed version of the problem, where we want to design a directed temporal graph. We establish a strong link between our problem and the well known feedback vertex set problem. More precisely, we show that the minimal number of temporal edges we need to satisfy all the connectivity requirements is $n-cc+f$, where $n=|V|$, $cc$ is the number of connected components of $R$, and $f$ is the size of a minimum feedback vertex set in $R$. This characterization shows in particular that the problem is NP-hard.

We then tackle the problem in undirected temporal graphs. This is the case of the gossip problem mentioned above, which states that $2n-4$ edges is the minimum number of temporal edges when $R$ is a complete graph. On the other hand, it is easy to see that $n-cc$ edges are always necessary (where $cc$ is the number of connected components of $R$). We tackle the question to determine when this lower bound is reached, i.e., when the minimum number of temporal edges needed to satisfy the connectivity requirements is equal to $n-cc$. We provide a characterization when $R$ is strongly connected: in this case, we show that there exists a solution which is a tree if and only if $R$ satisfies a property that we call walk-Helly, which is a Helly-like property on walks in graphs. We also show a polynomial time algorithm that builds such a tree when it exists. This easily extends to the case where every connected components of $R$ are strongly connected.

{\it Organization:} Section~\ref{sec:preliminaries} formally introduces the problems we consider in this article. The directed case is dealt with in Section~\ref{sec:directed}, while the undirected case is studied in Section~\ref{sec:undirected}. Some open questions are given in the conclusion in Section~\ref{sec:conclusion}.\\ 

\noindent {\bf Related works}

Graph realization problems have been studied on static graphs since the 1960s and consist of finding a graph that satisfy a given property $P$ or answering no if such a graph does not exist~\cite{erdos1960,hakimi1965}. It was recently introduced in the context of temporal graphs in~\cite{klobas2024} and is an active topic of research~\cite{casteigts2025,cauvi2025,erlebach2025,erlebach2024,mertzios2025,mertzios2025-2,meusel2025}. In particular, the authors in~\cite{erlebach2025} studied the Reachability Graph Realizability problem that asks if the input directed graph is the reachability graph of some temporal graph, the reachability graph of a temporal graph $\G$ being the directed graph for which there is an arc from $u$ to $v$ if and only if there is a journey from $u$ to $v$ in $\G$. They studied the problem with various restrictions on the labeling of the temporal graph and showed that the problem is NP-hard for most of the variants, among other results. Note that our problem is different from the Reachability Graph Realizability problem as in our problem we can have a journey from $u$ to $v$ even if the arc $(u,v)$ is not in the request graph. Forbidding journeys plays a key role in the results on Reachability Graph Realizability, and the structures of solutions of the two problems differ fundamentally. 
In realization problems, the problem of constructing a temporal graph realizing the given property $P$ is a particular instance of the temporal network design problem in which one aims at building a temporal graph satisfying some constraints while optimizing some measures. Many temporal network design problems have been studied (see e.g.~\cite{akrida2017,deligkas2022,enright2021,klobas2024-2,mertzios2013}). 

Our work is also closely related to the study of temporal spanners, introduced in \cite{kempe2002}, which are minimal connected subgraph of a given temporal graph. Spanners have also been considered for weaker definitions of connectivity in \cite{kurita_et_al:LIPIcs.SAND.2025.9}.

Finally, our work can be seen as a specific case of the Temporal Pair Connectivity Augmentation Problem introduced in \cite{bellitto_et_al:LIPIcs.SAND.2025.3}. One of the problems the authors considered is, given a temporal graph and a set of unsatisfied connectivity requests, to satisfy the constraints by adding as few edges as possible to the input graph. Our problem here can be seen as the specific case where the input graph has no edge. The authors of \cite{bellitto_et_al:LIPIcs.SAND.2025.3} proved several hardness results for their problem, but their proofs use sophisticated input graphs and do not prove that the problem is difficult even when restricted to empty graphs.

\section{Preliminaries}
\label{sec:preliminaries}

\subsection{Basic definitions and notations}

For natural numbers $i\leq j$, we note $[i,j]:=\{i,i+1,\dots,j\}$.

We will consider both undirected graphs and directed graphs (digraphs).  For an undirected graph $G=(V,E)$  (resp. for a digraph $G=(V,A)$) and $S\subseteq V$ a vertex set, we define the {\it subgraph of $G$ induced by $S$} as $G[S]=(S, \{\{u,v\}\in E\:|\: u,v\in S\})$ (resp. $G[S]=(S, \{(u,v)\in A\:|\: u,v\in S\})$). 

A {\it walk} in an undirected graph (resp. digraph) $G$  between $u_0$ and $u_k$ (resp. from $u_0$ to $u_k$) is a sequence $W=(u_0,u_1,\dots,u_k)$ of vertices such that $\{u_i,u_{i+1}\}$ is an edge (resp. $(u_i,u_{i+1})$ is an arc) for $i\in [0,k-1]$. If $u_k=u_0$, we say that the walk $W$ is {\it closed}. If the vertices $u_0,\dots,u_k$ are distinct, it is called a {\it path}. If the vertices are distinct except for $u_0=u_k$ it is called a {\it cycle} (resp. {\it a circuit}).

We say that an undirected graph (resp. a digraph) is {\it connected} (resp. {\it strongly connected}) if there is a path between $s$ and $t$ (resp. from $s$ to $t$) for every vertex pair (resp. ordered pair) $s,t$.
A {\it connected component} (resp. {\it strongly connected component}) of an undirected graph (resp. of a digraph) $G$ is a connected (resp. strongly connected) subgraph of $G$ induced by a vertex set $S$ such that there is no $S'$ with $S\subset S'$ and $G[S']$ connected (resp. strongly connected).

For a digraph $G=(V,A)$, we define the underlying undirected graph $G_{\text{undir}}=(V,E)$ with $E=\{\{u,v\}\:|\:(u,v)\in A\}$. We say that a digraph $G$ is {\it connected} if $G_{undir}$ is connected. A {\it connected component} of a digraph $G$ is a connected subgraph of $G$ induced by a vertex set $S$ such that there is no $S'$ with $S\subset S'$ and $G[S']$ connected. 

\subsection{Temporal graphs}

\begin{definition}[Temporal graph]
        A temporal graph is a pair $\mathcal{G}=(V,\mathcal{E})$ where $V$ is a (finite) set of vertices and $\mathcal{E}=\{e_1,\dots,e_m\}$ is a set of (distinct) temporal edges. A temporal edge $e_i$ is a pair $(\{u,v\},t)$ where $u,v$ are distinct vertices of $V$ and $t\in \mathbb{N}^*$. $u$ and $v$ are called endpoints of $e_i$ and $t$ the appearance time of $e_i$.
\end{definition}    
    Note that there might be several temporal edges with the same endpoints (but with different appearance times). A temporal graph is called {\it simple} if for any pair $\{u,v\}$ there is at most one temporal edge with endpoints $u$ and $v$. 

    The {\it snapshot} of $\mathcal{G}=(V,\mathcal{E})$ at time $t$ is the (static) graph $G_t=(V,E_t)$ where $E_t$ is the set of edges whose appearance time is $t$. The {\it footprint} of $\G$ is the (static) graph $\G_{\downarrow}=(V,\mathcal{E}_{\downarrow})$ with $\mathcal{E}_{\downarrow}=\{\{u,v\}\:|\:(\{u,v\},t)\in \mathcal{E}\}$.
    
    {\it Temporal digraphs} are defined similarly, on a vertex set $V$ and on a set of temporal arcs where a temporal arc is a pair $e=((u,v),t)$ (the arc is directed from $u$ to $v$).  
    
    \begin{definition}[Journeys]
        A \textit{journey} of a temporal graph (resp. a temporal digraph) $\mathcal{G}$ is a sequence  $(u_0,u_1,t_0),(u_1,u_2,t_1),\dotsc,(u_{k-1},u_k,t_{k-1})$ where:
        \begin{itemize}
            \item $(\{u_i,u_{i+1}\},t_i)$ is a temporal edge (resp. $((u_i,u_{i+1}),t_i)$ is a temporal arc) of $\mathcal{G}$;
            \item $t_i < t_{i+1}$.
        \end{itemize}
    \end{definition} 
    Journeys can be also written as sequences of vertices (as in the case of static graphs) together with the time of the temporal edge/arc linking two consecutive vertices.
    
    Note that this definition corresponds to what is usually called a {\it strict journey} (as time is required to be strictly increasing), as opposed to a non-strict journey (where time is only required to be non-decreasing). As we will only focus on strict journeys (for reasons that we explain at the end of the next subsection), we omit to precise {\it strict} in the article. 
    
    \begin{definition}[Reachability]
        In a temporal graph $\mathcal{G}=(V,\mathcal{E})$, a vertex $v$ is \textit{reachable} from $u$ if there exists a journey from $u$ to $v$. The reachability graph of $\mathcal{G}$ is the graph $\mathrm{Reach}(\mathcal{G})$ on vertex set $V$ containing all arcs $(u,v)$ such that $v$ is reachable from $u$ in $\mathcal{G}$. If every vertex $v$ is reachable from every other vertex $u$ in $\mathcal{G}$, i.e., if $\mathrm{Reach}(\mathcal{G})$ is complete, then $\mathcal{G}$ is \textit{temporally connected}. 
    \end{definition}

\subsection{Connectivity Request Satisfaction problems}

    Now we can formally define the main problems under consideration.
    \begin{quote}
        \textsc{Connectivity Request Satisfaction} (CRS)\\
        \textbf{Input:} A static digraph $R=(V,A)$, and integer $k$.\\
        \textbf{Question:} Is there a temporal graph $\mathcal{G}=(V,\mathcal{E})$ with at most $k$ temporal edges such that for all $(v_i,v_j)\in A$ there exists a journey from $v_i$ to $v_j$ in $\mathcal{G}$? 
    \end{quote}
    
    Equivalently, we want that the reachability graph of $\mathcal{G}$ contains all arcs of $R$. Note that the reachability graph of $\G$ can contain arcs that are not arcs of $R$.

    We define similarly the \textsc{Directed Connectivity Request Satisfaction} (DCRS) where the question is to determine whether there exists a temporal digraph fulfilling the connectivity requirements.

    \begin{quote}
        \textsc{Directed Connectivity Request Satisfaction} (DCRS)\\
        \textbf{Input:} A static digraph $R=(V,A)$, and integer $k$.\\
        \textbf{Question:} Is there a temporal digraph $\mathcal{G}=(V,\mathcal{E})$ with at most $k$ temporal arcs such that for all $(v_i,v_j)\in A$ there exists a journey from $v_i$ to $v_j$ in $\mathcal{G}$? 
    \end{quote}
    
    When considering an instance $(R,k)$ of CRS (resp. DCRS), we call $R$ the {\it request graph}. In the following, we will also consider the minimization problem, which we call \textsc{Min Connectivity Request Satisfaction} (MinCRS) (resp. \textsc{Directed Connectivity Request Satisfaction} (MinDCRS)), in which only a request graph $R$ is given as input and one asks what is the minimum number of temporal edges (resp. temporal arcs) of a temporal graph (resp. of a temporal digraph) whose reachability graph contains all arcs of $R$. 

    \begin{example}\label{ex:example}
    Let us consider the request graph $R$ depicted in Figure~\ref{fig:ex} (left part). In the center part of of Figure~\ref{fig:ex} is given a feasible (in fact, optimal) solution $\mathcal G$ for MinCRS. For instance, in $\mathcal G$ there is a journey from $c$ to $a$, thus fulfilling the request $(c,a)$ in $R$. We note that in this example the reachability graph of $\mathcal G$ strictly contains $R$, as for instance there is a journey from $b$ to $d$. More precisely, the reachability graph of $\mathcal G$ contains all arcs but $(d,a)$, $(d,b)$ and $(e,b)$.
    
    On the right part is given a feasible (in fact, optimal) solution $\mathcal G'$ of MinDCRS. It uses 6 arcs, while the optimal solution of MinCRS uses only 5 edges (using the fact that edges can be taken in both directions in a journey).

    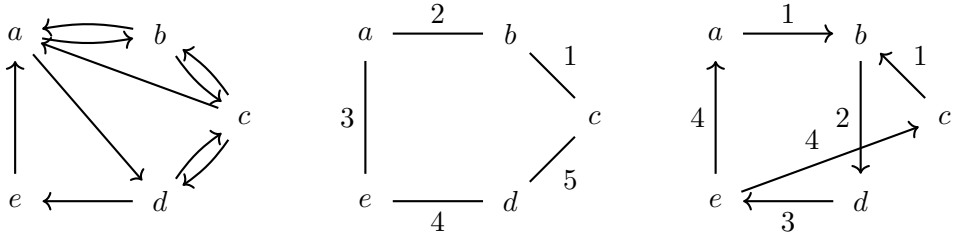
\begin{figure}[!ht]
      \centering
        \begin{tikzpicture}[node distance={20mm}, thick, main/.style = {draw=none, circle, minimum size=0.7cm}]
          \node[main] (a) [] {$a$};
          \node[main] (b) [right = 1.2cm of a] {$b$};
          \node[main] (c) [below right= 0.6cm and 0.6cm of b] {$c$};
          \node[main] (d) [below left = 0.6cm and 0.6cm of c] {$d$};
          \node[main] (e) [left = 1.2cm of d] {$e$};
          \draw[->] (c) to[out=-125,in=35] (d);
          \draw[->] (d) to[out=55,in=-145] (c);
          \draw[->] (d) -- (e);
        \draw[->] (a) -- (d);
                \draw[->] (c) -- (a);
          \draw[->] (e) -- (a);
        \draw[->] (b) to[out=170,in=10] (a) ;
        \draw[->] (a) to[out=-10,in=-170] (b) ;
         \draw[->] (b) to[out=-55,in=145] (c) ;
        \draw[->] (c) to[out=125,in=-35] (b) ;
        
          \node[main] (a2) [right = 2cm of b] {$a$};
          \node[main] (b2) [right = 1.2cm of a2] {$b$};
          \node[main] (c2) [below right= 0.6cm and 0.6cm of b2] {$c$};
          \node[main] (d2) [below left = 0.6cm and 0.6cm of c2] {$d$};
          \node[main] (e2) [left = 1.2cm of d2] {$e$};
          \draw[-] (a2) -- node[midway, above] {2} (b2);
          \draw[-] (b2) -- node[midway, above right] {1} (c2);
          \draw[-] (c2) -- node[midway, below right] {5} (d2);
            \draw[-] (d2) -- node[midway, below] {4} (e2);
          \draw[-] (e2) -- node[midway, left] {3} (a2);

          \node[main] (a3) [right = 2cm of b2] {$a$};
          \node[main] (b3) [right = 1.2cm of a3] {$b$};
          \node[main] (c3) [below right= 0.6cm and 0.6cm of b3] {$c$};
          \node[main] (d3) [below left = 0.6cm and 0.6cm of c3] {$d$};
          \node[main] (e3) [left = 1.2cm of d3] {$e$};
          \draw[->] (a3) -- node[midway, above] {1} (b3);
          \draw[->] (c3) -- node[midway, above right] {1} (b3);
          \draw[->] (b3) -- node[midway, left] {2} (d3);
            \draw[->] (d3) -- node[midway, below] {3} (e3);
            \draw[->] (e3) -- node[midway, above left] {4} (c3);
          \draw[->] (e3) -- node[midway, left] {4} (a3);
        \end{tikzpicture}
      \caption{Request graph $R$ (left), optimal solutions for MinCRS (center part) and for MinDCRS (right part).}
      \label{fig:ex}
    \end{figure}
\end{example}

    Note that if we were considering non strict paths/walks in temporal graphs, then in both CRS and DCRS we would use only one time step (all the edges/arcs would have the same appearance time), and then the problems would be equivalent to the problems in static graphs. In static graphs, the problems are trivial: in the non directed case, an optimal solution is to build a tree on every connected component of the request graph. In the directed case, an optimal solution is to build a circuit in all connected components of the request graph that contains a circuit and a path on the others.

\section{Directed case}
\label{sec:directed}
  In this section, we study DCRS. We show that the problem is equivalent to the minimum directed feedback vertex set problem. To this end, we first present a construction that uses $n+f-cc$ temporal edges, where $n=|V|$ is the number of vertices of $R=(V,A)$, $cc$ its number of connected components and $f$ the size of a directed feedback vertex set in $R$. Then we will prove that this construction is minimal, i.e. there is no solution to DCRS if $k$ is smaller.
  \begin{definition}[Directed Feedback Vertex Set (DFVS)]
    Let $G=(V,A)$ be a digraph. $S \subseteq V$ is a DFVS if every circuit of $G$ contains a vertex in $S$.
  \end{definition}
  \begin{lemma}
    \label{lemma:constr}
    Let $R$ be a request graph. There is a temporal graph satisfying $R$ with $n+f-cc$ temporal edges.
  \end{lemma}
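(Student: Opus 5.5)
Throughout, ``temporal graph'' in the statement is read as a temporal digraph, since this section is about DCRS. The plan is to build a solution separately on each connected component $C$ of $R$. Let $S$ be a DFVS of $R$ of size $f$ and put $S_C=S\cap C$, $n_C=|C|$. It suffices to satisfy all requests inside $C$ with $n_C-1+|S_C|$ temporal arcs, because summing over the $cc$ components gives $n-cc+f$ (every arc of $R$ lies inside one component). The idea is to use the acyclic structure left after deleting $S_C$, and to pay one extra arc per vertex of $S_C$, so that these vertices can act both as an early source and as a late sink.

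Fix a component $C$ and let $D=R[C\setminus S_C]$, which is acyclic. Take a topological order $v_1,\dots,v_m$ of $D$, with $m=n_C-|S_C|$. Add the directed path $v_1\to v_2\to\dots\to v_m$, giving arc $(v_i,v_{i+1})$ time $i+1$. Then for $i<j$ the vertex $v_j$ is reachable from $v_i$. In particular, every arc of $R$ between two vertices of $D$ goes forward in the order and is satisfied, which costs $m-1$ arcs. Now handle each $s\in S_C$ with two arcs: an arc $(s,v_1)$ at time $1$ and an arc $(v_m,s)$ at time $m+1$.

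With this construction, $s$ reaches every $v_j$ (start at time $1$, then follow the path), every $v_j$ reaches $s$ (follow the path to $v_m$, then take time $m+1$), and $s$ reaches $s'$ for $s,s'\in S_C$ via $s\to v_1\to\dots\to v_m\to s'$. Hence every request of $R$ inside $C$ is satisfied. Indeed, the reachability graph contains the transitive tournament on the $v_i$ together with all arcs to and from $S_C$. The count is $(m-1)+2|S_C|=n_C-1+|S_C|$, as required.

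The main obstacle is the degenerate case $m=0$, where $C\subseteq S$ and there is no path to hang the vertices of $S_C$ on. If $|C|=1$, no arcs are needed, and $0\le n_C-1+|S_C|$. Otherwise I would not use $S_C=C$ as it stands: pick any $x\in C$ and move it out, treating $D=\{x\}$ with $m=1$. The construction then uses $2(n_C-1)\le n_C-1+|S_C|$ arcs, since $|S_C|=n_C$. More generally, whenever $D$ is empty the same trick works, so the bound holds for any DFVS, and in particular for a minimum one. Distinctness of the temporal arcs is immediate since they have distinct endpoint pairs, and arcs sharing time $1$ (or time $m+1$) never need to be chained consecutively.
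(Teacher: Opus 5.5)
Your construction is correct and is essentially the paper's own: a path with increasing times along a topological order of $R[C\setminus S_C]$, with each DFVS vertex sending an arc into the first path vertex at time $1$ and receiving an arc from the last path vertex at the final time, for $n_C-1+|S_C|$ arcs per component. Your separate treatment of the case where the DFVS contains an entire component, so the path would be empty, covers a point the paper's proof skips; this case cannot arise for a minimum DFVS when $R$ has no loops, and using fewer than $n+f-cc$ arcs there does no harm.
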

  \begin{proof}
    We construct a solution with $n+f-1$ arcs when the graph $R$ is connected. Let $F=\{v_1,\dots,v_f\}$ be a DFVS of $R$, and $(u_1,\dots,u_{n-f})$ be a topological order induced by $R$ after the removal of $F$. At time 1, we have arcs from the DFVS to $u_1$: $$\mathcal E_1 = \{((v_i,u_1),1)| i \in \{1,\dots,f\}\}$$
    We then add the temporal arcs along the topological order:
    $$\mathcal E_2=\{((u_1,u_2),2),\dots,((u_{n-f-1},u_{n-f}),n-f)\}$$
    Finally we add arcs from $u_{n-f}$ to the DFVS: 
    $$\mathcal E_3 = \{((u_{n-f},v_i),n-f+1)| i \in \{1,\dots,f\}\}$$
    The construction is illustrated in Figure~\ref{fig:soldirected}. In $R$, any arc $a$ with an endpoint in the DFVS is satisfied in $\mathcal G = (V, \mathcal E_1 \cup \mathcal E_2 \cup \mathcal E_3)$ because there is a journey from any vertex of the DFVS to any other vertex in $\mathcal G$ and vice versa. Since every remaining arc is oriented in the same direction as the topological order, the request is filled along the path $\mathcal E_2$. This connected construction has $|\mathcal E_1|+|\mathcal E_2|+|\mathcal E_3|=2f+n-f-1=n+f-1$ temporal edges.  This can be extended to $R$ with multiple connected components, taking a DFVS of each component and having the same construction but on this DFVS and the vertices of that component. The total number of edges would then be $n+f-cc$.
    \begin{figure}[!ht]
      \centering
        \begin{tikzpicture}[node distance={20mm}, thick, main/.style = {draw=none, circle, minimum size=0.9cm}]
          \node[main] (vf) [] {$v_f$};
          \node[main] (v1) [below = 2cm of vf] {$v_1$};
          \node[main] (u1) [below right= 1cm and 2cm of vf] {$u_1$};
          \node[main] (u2) [below = 1cm of u1] {$u_2$};
          \node[main] (unf1) [left = 4cm of u2] {$u_{n-f-1}$};
          \node[main] (unf) [above = 1cm of unf1] {$u_{n-f}$};
          \node (intu) [draw=none,left = 1.75cm of u2] {\textbf{$\cdots$}};
          \node (intv) [draw=none, below = 0.5cm of vf] {\textbf{$\vdots$}};
          \draw[->] (vf) -- (u1);
          \draw[->] (intv) -- node[midway, below] {1} (u1);
          \draw[->] (v1) -- (u1);
          \draw[->] (u1) -- node[midway, right] {2} (u2);
          \draw[->] (u2) -- node[midway, below] {3} (intu);
          \draw[->] (intu) -- node[midway, below] {\small{$n-f-1$}} (unf1);
          \draw[->] (unf1) -- node[midway, left] {$n-f$} (unf);
          \draw[->] (unf) -- (vf);
          \draw[->] (unf) -- (v1);
          \draw[->] (unf) -- node[midway, below] {$n-f+1$} (intv);
        \end{tikzpicture}
      \caption{Construction for a connected component $C$, with $n$ vertices and a DFVS of size $f$, of a request graph $R$ using $n+f-1$ temporal arcs and satisfying the requests of $R$ in $C$.}
      \label{fig:soldirected}
    \end{figure}
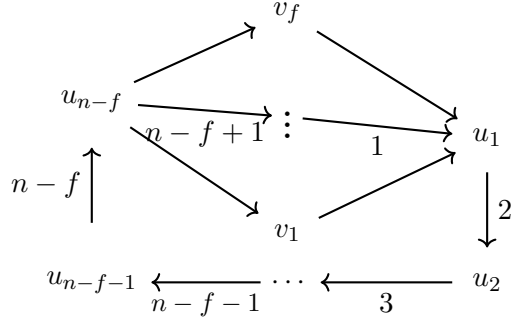
  \end{proof}
  In order to prove the minimality of this construction, we first prove the following lemmas:
  \begin{lemma}
  \label{lemma:sccfootprintreach}
    Let $\mathcal{G}$ be a temporal digraph, $\mathrm{Reach}(\mathcal G)$ and $\mathcal G_{\downarrow}$ have the same strongly connected components.
  \end{lemma}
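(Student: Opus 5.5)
We show that for any two vertices $u,v$, they lie in the same strongly connected component of $\mathrm{Reach}(\mathcal G)$ if and only if they lie in the same strongly connected component of $\mathcal G_{\downarrow}$. Both directions compare reachability in the two digraphs, via two observations.

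\emph{Reach is contained in the footprint's reachability.} Let $(u_0,u_1,t_0),\dots,(u_{k-1},u_k,t_{k-1})$ be a journey in $\mathcal G$. Each $((u_i,u_{i+1}),t_i)$ is a temporal arc, so $(u_i,u_{i+1})$ is an arc of $\mathcal G_{\downarrow}$, and the journey projects to a walk from $u_0$ to $u_k$ in $\mathcal G_{\downarrow}$. Hence every arc of $\mathrm{Reach}(\mathcal G)$ joins a vertex to a vertex it reaches in $\mathcal G_{\downarrow}$. Concatenating, any path in $\mathrm{Reach}(\mathcal G)$ yields a walk in $\mathcal G_{\downarrow}$. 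So if $u,v$ are mutually reachable in $\mathrm{Reach}(\mathcal G)$, they are mutually reachable in $\mathcal G_{\downarrow}$.

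\emph{The footprint is contained in Reach.} Each arc $(x,y)$ of $\mathcal G_{\downarrow}$ comes from some temporal arc $((x,y),t)$, which is a one-arc journey. Hence $(x,y)$ is an arc of $\mathrm{Reach}(\mathcal G)$, so $\mathcal G_{\downarrow}$ is a spanning subgraph of $\mathrm{Reach}(\mathcal G)$. Thus mutual reachability in $\mathcal G_{\downarrow}$ implies mutual reachability in $\mathrm{Reach}(\mathcal G)$.

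Combining the two directions, the relations ``mutually reachable'' coincide in the two digraphs. Strongly connected components are exactly the equivalence classes of this relation, so they coincide; taking induced subgraphs on the same vertex sets gives the same components in the sense of the paper's definition. The only point needing care is that $\mathrm{Reach}(\mathcal G)$ need not be transitive, since journeys do not compose. We therefore must argue with paths in $\mathrm{Reach}(\mathcal G)$ rather than single arcs, and the first observation handles this because it works arc by arc.
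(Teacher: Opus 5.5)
Your proof is correct and follows essentially the same route as the paper: a journey projects to a walk in $\mathcal G_{\downarrow}$, and each arc of $\mathcal G_{\downarrow}$ is a one-arc journey, so mutual reachability coincides in $\mathrm{Reach}(\mathcal G)$ and $\mathcal G_{\downarrow}$. The paper concludes by showing that each SCC of one digraph is contained in an SCC of the other, while you use equivalence classes of mutual reachability; this is the same argument.
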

  \begin{proof}
    \begin{itemize}
      \item Suppose $S \subseteq V$ is an SCC of $\mathrm{Reach}(\mathcal G)$. Then for $u,v \in S$ in $\mathcal G$ there is a sequence of directed journeys between $u$ and $v$ in both directions by definition of $\mathrm{Reach}(\mathcal G)$, which imply that there exists a path from $u$ to $v$ and a path from $v$ to $u$ in $\mathcal G_{\downarrow}$. Thus $S \subseteq S'$ with $S'$ an SCC in $\mathcal G_{\downarrow}$.
      \item Suppose $S \subseteq V$ is an SCC of $\mathcal G_{\downarrow}$. Then for $u,v \in S$, we have the directed paths from $u$ to $v$ and $v$ to $u$ in $\mathcal G_{\downarrow}$, and each arc of those paths are also in $\mathrm{Reach}(\mathcal G)$. Thus $S \subseteq S'$ with $S'$ an SCC of $\mathrm{Reach}(\mathcal G)$.\qedhere
    \end{itemize}
  \end{proof}
  
  We then highlight the relation between the size of a feedback vertex set and the number of temporal edges.
  
  \begin{lemma}
  \label{lemma:fvs}
  Let $\mathcal G=(V,\mathcal{E})$ be a temporal digraph such that $\mathcal G_{\downarrow}=(V,A)$ has $cc$ connected components. For $p = |\mathcal E| + cc - n$ there exists $S \subseteq V$ a DFVS of $\mathrm{Reach}(\mathcal G)$ with $|S| \leq p$.
  \end{lemma}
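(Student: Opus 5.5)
The plan is to argue by induction on the number $m=|\mathcal E|$ of temporal arcs, peeling off an arc with the \emph{largest} appearance time. For the base case $m=0$: the footprint has $cc=n$ components, so $p=0$. Moreover $\mathrm{Reach}(\mathcal G)$ has no arcs, so the empty set is a DFVS.

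For the inductive step, let $e=((u,v),t)$ be a temporal arc of $\mathcal G$ with maximum appearance time, and let $\mathcal G'=(V,\mathcal E\setminus\{e\})$. The key observation is how journeys of $\mathcal G$ relate to those of $\mathcal G'$. Since $t$ is maximal, a journey of $\mathcal G$ using $e$ must use it as its last arc, because later arcs would need strictly larger times. Hence
\[
\mathrm{Reach}(\mathcal G)=\mathrm{Reach}(\mathcal G')\cup\{(x,v)\;:\; x=u \text{ or } (x,u)\in \mathrm{Reach}(\mathcal G')\}.
\]
In particular every new arc ends at $v$. Every new arc also starts in the connected component of $u$ in $\mathcal G'_{\downarrow}$, since reachability in $\mathcal G'$ implies a path in its footprint. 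Let $cc'$ be the number of connected components of $\mathcal G'_{\downarrow}$. Then either $cc'=cc$ or $cc'=cc+1$, the latter happening exactly when $\{u,v\}$ is a bridge of the footprint realised by the single temporal arc $e$. By induction, $\mathrm{Reach}(\mathcal G')$ has a DFVS $S'$ with $|S'|\le p'=(m-1)+cc'-n$.

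\emph{Case $cc'=cc$.} Here $p'=p-1$, and I take $S=S'\cup\{v\}$, so $|S|\le p$. A circuit of $\mathrm{Reach}(\mathcal G)$ that avoids $v$ uses no new arc, since all new arcs end at $v$. Such a circuit is therefore a circuit of $\mathrm{Reach}(\mathcal G')$ and is hit by $S'$.

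\emph{Case $cc'=cc+1$.} Here $p'=p$, and I keep $S=S'$. The point is that $u$ and $v$ lie in different components $C_u\neq C_v$ of $\mathcal G'_{\downarrow}$. Arcs of $\mathrm{Reach}(\mathcal G')$ stay inside components of $\mathcal G'_{\downarrow}$, while every new arc goes from $C_u$ to $v\in C_v$. A circuit using a new arc would therefore have to return from $C_v$ to $C_u$, which is impossible. So every circuit of $\mathrm{Reach}(\mathcal G)$ lies in $\mathrm{Reach}(\mathcal G')$ and is hit by $S'$. Possible self-loops (closed journeys) cause no trouble: in this case new arcs join distinct components, and in the other case $v\in S$.

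The step requiring the most care is the structural description of $\mathrm{Reach}(\mathcal G)$ in terms of $\mathrm{Reach}(\mathcal G')$. In particular one must check that choosing a \emph{maximum-time} arc really forces $e$ to be the final arc of any journey through it. This is what confines all new reachability to arcs entering the single vertex $v$, and that confinement makes both cases work.
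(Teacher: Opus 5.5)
Your proof is correct, and it takes a somewhat different route from the paper's. One small fix first. The displayed identity for $\mathrm{Reach}(\mathcal G)$ holds only as an inclusion $\subseteq$ in general. If another arc also has the maximum time $t$, a journey of $\mathcal G'$ from $x$ to $u$ may arrive at time $t$, and then it cannot be extended by $e$; take for example $((x,u),t)$ and $((u,v),t)$. Both of your cases only use that every arc of $\mathrm{Reach}(\mathcal G)\setminus\mathrm{Reach}(\mathcal G')$ ends at $v$ and starts in $C_u$. That is exactly the inclusion, so nothing breaks.

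\textbf{The paper's route.} It inducts on $p$ and deletes the \emph{earliest} temporal arc $((u,v),t_0)$ among those with both endpoints in a nontrivial strongly connected component of the footprint. It then puts the tail $u$ into the DFVS. Staying inside an SCC guarantees the connected components are unchanged, so $p$ drops by exactly one and no case split is needed. The cost is localization work:
\begin{itemize}
\item the base case, and the acyclic-footprint case, need Lemma~\ref{lemma:sccfootprintreach} to pass from an acyclic footprint to an acyclic $\mathrm{Reach}(\mathcal G)$;
\item the same lemma places any surviving circuit inside the SCC of $u,v$;
\item the minimality of $t_0$ only applies after noting that journeys between vertices of that SCC stay inside it.
\end{itemize}

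\textbf{Your route.} You take a \emph{globally} latest arc and put its head $v$ into the DFVS. Maximality forces $e$ to be the last arc of every journey through it, with no localization needed. So you never need Lemma~\ref{lemma:sccfootprintreach}, and your base case $m=0$ is trivial. The price is that deleting $e$ may split a component. Your second case handles this cleanly: new arcs go from $C_u$ to $v\in C_v$, and no arc of $\mathrm{Reach}(\mathcal G)$ leaves $C_v$. This also covers loops coming from closed journeys.

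Both arguments rest on the same mechanism: an extremal-time arc can only sit at one end of a journey, so one of its endpoints absorbs every new cycle. Yours is the more self-contained of the two.
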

  \begin{proof}
    We proceed by induction on $p$. We note that by definition of the footprint $\mathcal G_{\downarrow}$ we have $|A|\leq |\mathcal{E}|$. In particular, $p$ cannot be negative (since $cc\geq n-|A|$, as in any graph). 
    \begin{itemize}
      \item For $p=0$, $|\mathcal{E}|=n-cc$. As $|A|\leq |\mathcal{E}|$, necessarily every connected component  $C$ of $\mathcal G_{\downarrow}$ is a tree. Thus $C$ is a DAG. By Lemma~\ref{lemma:sccfootprintreach}, we also have that $\mathrm{Reach}(\mathcal G)$ is a union of DAGs thus there exists a DFVS of size $p=0$.
      \item Suppose the property holds for some $p_0\geq 0$, let us prove it holds for $p=p_0+1$. Let $\mathcal G$ be a temporal digraph such that $\mathcal G_{\downarrow}=(V,A)$ has $cc$ connected components and $|\mathcal{E}|=n-cc+p$. If $\mathcal G_{\downarrow}$ has no circuit, then as in the case $p=0$ there is a DFVS of size $0\leq p$. Otherwise, let $C$ be a connected component of $\mathcal G_{\downarrow}$ which contains an SCC $S$ with at least 2 vertices. Finally, let us consider $((u,v),t_0) \in \mathcal E$ to be the temporal edge with both endpoints in $S$ with the smallest appearance time.
        
        We define $\mathcal G'$ to be the temporal digraph obtained from $\mathcal G$ by removing  $((u,v),t_0)$. As both $u$ and $v$ are in the SCC $S$, there is a path from $v$ to $u$ in $\mathcal G_\downarrow$, so $\mathcal G_\downarrow$ and $\mathcal G'_\downarrow$ have the same connected components. In particular, $\mathcal G'_\downarrow$ has $cc$ connected components. So we can apply our induction hypothesis to $\mathcal G'$ (which has one temporal arc less than $\mathcal G$) with $p_0=|\mathcal E|-1+n-cc$: there exists a DFVS $F'$ of $\mathrm{Reach}(\mathcal G')$ such that $|F'| \leq p_0$.
    
    Let us show that $F' \cup \{u\}$ is a DFVS of $\mathrm{Reach}(\mathcal G)$. Suppose a contrario that there exists a circuit $W$ in $\mathrm{Reach}(\mathcal G)$ that does not intersect $F' \cup \{u\}$. According to Lemma~\ref{lemma:sccfootprintreach}, the vertices of $W$ must be included in an SCC of $\mathcal G_\downarrow$. This SCC must be the one containing $u$ and $v$: indeed, in $\mathcal G_\downarrow$ there is no directed path between two vertices of $W$ using $(u,v)$ (otherwise $u$ and $v$ would be in that SCC), hence $W$ would be a circuit in $\mathrm{Reach}(\mathcal G')$, contradicting the definition of $F'$. 
    
    So the vertices of $W$ are in the SCC of $u$ and $v$. As $F'$ is a DFVS of $\mathcal G'$, there must be an arc $(a,b)$ in the circuit $W$ which is in $\mathrm{Reach}(\mathcal G)$ but not in $\mathrm{Reach}(\mathcal G')$. Hence, we have a journey from $a$ to $b$ in $\mathcal G$ that uses $((u,v),t_0)$. But by minimality of $t_0$ this path must start from $u$, a contradiction as $W$ does not contain $u$.  
    $F' \cup \{u\}$ is therefore a DFVS of $\mathrm{Reach}(\mathcal G)$ of size $|F'| + 1 \leq p_0 + 1=p$.\qedhere
    \end{itemize}
  \end{proof}
  
  We now prove the main theorem of this section:
  \begin{theorem}
\label{thm:maindirected}
    Let $(R,k)$ be an instance of DCRS with $cc$ connected components, and $f$ be the size of a minimum DFVS of $R$. There exists $\mathcal G$ a solution to this instance iff $k \geq n + f - cc$.
  \end{theorem}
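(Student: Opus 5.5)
The ``if'' direction is immediate from Lemma~\ref{lemma:constr}. It gives a temporal digraph satisfying $R$ with exactly $n+f-cc$ temporal arcs. When $k \geq n+f-cc$ this digraph is already a solution of size at most $k$; if one insists on using exactly $k$ arcs, extra arcs can be added at fresh late times without destroying any journey.

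For the ``only if'' direction, I take a solution $\mathcal G=(V,\mathcal E)$ with $|\mathcal E|\leq k$ and let $cc'$ be the number of connected components of $\mathcal G_\downarrow$. The plan is to apply Lemma~\ref{lemma:fvs}, which yields a DFVS $S$ of $\mathrm{Reach}(\mathcal G)$ with $|S|\leq |\mathcal E|+cc'-n$. Since $\mathcal G$ satisfies $R$, every arc of $R$ is an arc of $\mathrm{Reach}(\mathcal G)$. Hence every circuit of $R$ is a circuit of $\mathrm{Reach}(\mathcal G)$ and meets $S$, so $S$ is also a DFVS of $R$. This gives $f\leq |S|\leq |\mathcal E|+cc'-n$.

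It remains to compare $cc'$ with $cc$. Every arc $(u,v)$ of $R$ is realised by a journey from $u$ to $v$ in $\mathcal G$, and that journey is a path in $\mathcal G_\downarrow$. So $u$ and $v$ lie in the same connected component of $\mathcal G_\downarrow$. Thus every connected component of $R$ lies inside one component of $\mathcal G_\downarrow$, which gives $cc'\leq cc$. Combining, $k\geq |\mathcal E|\geq n+f-cc'\geq n+f-cc$.

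I do not expect a real obstacle, since the heavy lifting is in Lemma~\ref{lemma:fvs}. The only points needing care are the two monotonicity steps: a DFVS of a supergraph restricts to a DFVS of $R$ on the same vertex set, and the component count can only decrease from $R$ to $\mathcal G_\downarrow$.
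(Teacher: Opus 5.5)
Your proof is correct and follows the paper's argument. Sufficiency comes from Lemma~\ref{lemma:constr}, and necessity comes from applying Lemma~\ref{lemma:fvs} to a solution and noting that a DFVS of $\mathrm{Reach}(\mathcal G)$ is also a DFVS of $R$. Your explicit check that $\mathcal G_\downarrow$ has $cc'\leq cc$ components is a worthwhile addition: the paper's proof silently substitutes the component count of $R$ where Lemma~\ref{lemma:fvs} is stated with that of the footprint.
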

  \begin{proof}
    Suppose that there exists a $\mathcal G$ solution such that $|\mathcal E| = n+f-cc-d$, with $d > 0$. Then by Lemma~\ref{lemma:fvs} we have a DFVS of $\mathrm{Reach}(\mathcal G)$ of size $f'=|\mathcal E|+cc-n=f-d$. However, since $\mathrm{Reach}(\mathcal G)$ contains $R$, there is a DFVS of $R$ of size $f-d$ contradicting the minimality of $f$. With Lemma~\ref{lemma:constr}, we can conclude that there is a solution iff $k \geq n + f - cc$.
  \end{proof}
  Specifically, for MinDCRS, finding a solution is therefore equivalent to finding a minimum DFVS which is known to be NP-complete~\cite{Karp1972} and FPT in the size of the set~\cite{DFVSFPT}, giving us the following corollary:
  \begin{corollary}
    MinDCRS is NP-complete, and FPT with respect to the parameter $k'=k+cc-n$.
  \end{corollary}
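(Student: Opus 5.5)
We prove the corollary in two parts: membership plus NP-hardness via a reduction from DFVS, and fixed-parameter tractability, each resting on Theorem~\ref{thm:maindirected}. We read the decision problem as DCRS, i.e.\ given $(R,k)$, decide whether a temporal digraph with at most $k$ arcs satisfies $R$.

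\textbf{Membership in NP.} A certificate is a temporal digraph $\mathcal G$ with at most $k$ arcs. We may assume $k\le n^2$, since otherwise the instance is trivially yes by Lemma~\ref{lemma:constr} (note $n+f-cc\le 2n$). We may also assume appearance times lie in $[1,k]$: relabelling times by their rank preserves strict journeys. The verifier computes $\mathrm{Reach}(\mathcal G)$ in polynomial time, by processing arcs in increasing time order and propagating the earliest arrival time from each source. It then checks that $A\subseteq \mathrm{Reach}(\mathcal G)$. Alternatively, by Theorem~\ref{thm:maindirected} a DFVS of $R$ of size $k+cc-n$ is itself a certificate.

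\textbf{NP-hardness.} We reduce from DFVS, known to be NP-complete~\cite{Karp1972}. Given a digraph $R$ and an integer $f_0$, output the instance $(R,\,n+f_0-cc)$ of DCRS, where $cc$ is computed in linear time. By Theorem~\ref{thm:maindirected}, $R$ has a DFVS of size at most $f_0$ if and only if the minimum DFVS size $f$ satisfies $n+f-cc\le n+f_0-cc$, which holds if and only if this DCRS instance is a yes-instance. The reduction is clearly polynomial.

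\textbf{Fixed-parameter tractability.} Given $(R,k)$, compute $cc$ and $k'=k+cc-n$. If $k'<0$, answer no. Otherwise, by Theorem~\ref{thm:maindirected} the instance is positive if and only if $R$ has a DFVS of size at most $k'$. We decide this with the FPT algorithm for DFVS~\cite{DFVSFPT}, in time $g(k')\cdot\mathrm{poly}(n)$. If a solution is wanted, take the DFVS returned and apply the construction of Lemma~\ref{lemma:constr}, which is polynomial given the set: it needs only a topological sort per component.

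The only subtle point is the NP membership, namely bounding the certificate size via the time relabelling and the polynomial journey computation. The remaining steps are direct consequences of Theorem~\ref{thm:maindirected}.
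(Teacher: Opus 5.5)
Your proposal is correct and follows essentially the paper's argument: Theorem~\ref{thm:maindirected} makes a DCRS instance $(R,k)$ equivalent to asking for a DFVS of $R$ of size at most $k+cc-n$, so NP-hardness and fixed-parameter tractability carry over directly from the known DFVS results. Your explicit NP-membership argument (capping $k$ via Lemma~\ref{lemma:constr} and relabelling appearance times by rank) fills in a detail that the paper leaves implicit.
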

  
\section{Undirected case: trees}
\label{sec:undirected}

In this section, we study (undirected) MinCRS. First, note that our result from Section \ref{sec:directed} does not hold when the graph we build is undirected. Indeed, consider the case where the request graph is a bidirected path. Here, a minimal directed feedback vertex set would have size $\lfloor \frac n 2 \rfloor$, but the MinCRS problem has a solution with only $n-1$ edges (any path works regardless of the appearance times of the edges). This contradicts Lemma \ref{lemma:fvs} and Theorem \ref{thm:maindirected}.

In graph theory, problems on undirected graphs are generally easier to figure out, but here, temporality induces an orientation on the paths anyway, as any path of length more than two can only be used in at most one direction because of the appearance times of the edges, even if the edges are undirected. 
Still paths of length one, i.e., temporal edges, can now be used both ways. As illustrated in the paragraph above, this possibility may allow for a drastic reduction of the number of edges needed to satisfy the connectivity requirements. How to optimally take advantage of this possibility turns out to be very challenging.

Let $R$ be an instance of MinCRS and $cc$ be the number of connected components of $R$. It is easy to see that any solution needs at least $n-cc$ temporal edges. This section aims at characterizing the cases where there exists a solution with {\it exactly} $n-cc$ temporal edges, that is for each connected component $C$ of $R$, $\G_\downarrow[C]$ is a tree. In the following, we will assume that $R$ is connected and we ask in which case we have a tree solution.

\subsection{Tree representation}

We first show the following lemma on the structure of such a tree solution.

\begin{lemma}
\label{lemma:closed_walk}
Let $R=(V,A)$ be a request graph such that $R$ is connected and there exists a tree solution $\mathcal{T}$. Let $W=(u_0,u_1,\dots,u_{k-1},u_k=u_0)$ be a closed walk of $R$. Let $S=\{u_0,\dots,u_{k-1}\}$. Then, $\mathcal{T}_{\downarrow}[S]$ is connected. 
\end{lemma}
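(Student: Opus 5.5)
The idea is to exploit the fact that in a tree every journey is forced to follow the unique tree path between its endpoints, and then derive a cyclic chain of strict inequalities between appearance times.

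\textbf{Setup.} Since $R$ is connected and $\mathcal{T}$ is a tree solution, $\mathcal{T}$ has exactly $n-1$ temporal edges. Its footprint $\mathcal{T}_\downarrow$ is therefore a spanning tree on $V$, and each footprint edge carries a single appearance time $t(e)$. For $a,b\in V$, let $P(a,b)$ denote the unique path between $a$ and $b$ in $\mathcal{T}_\downarrow$. Any journey from $a$ to $b$ is in particular a walk from $a$ to $b$ in the tree. If it repeated an edge, that edge would be used twice with the same time, which contradicts strictly increasing times. Hence the journey follows $P(a,b)$ exactly, and the times along $P(a,b)$ are strictly increasing from $a$ to $b$. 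Every arc $(u_i,u_{i+1})$ of the closed walk $W$ is a request, so each $P(u_i,u_{i+1})$ carries strictly increasing times in that direction.

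\textbf{Reduction to a cut vertex.} Suppose for contradiction that $\mathcal{T}_\downarrow[S]$ is not connected. Then there are $a,b\in S$ whose tree path $P(a,b)$ contains a vertex $x\notin S$. Removing $x$ from the tree splits $V\setminus\{x\}$ into branches $B_1,\dots,B_r$, one for each neighbour $y_j$ of $x$. Let $e_j=\{x,y_j\}$ be the corresponding edge. Since $x\notin S$, every $u_i$ lies in some branch, and $a$ and $b$ lie in different branches. So the cyclic sequence $\beta(u_0),\beta(u_1),\dots,\beta(u_k)=\beta(u_0)$ of branch indices takes at least two distinct values.

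\textbf{Deriving a cyclic chain of inequalities.} Consider a step $u_i\to u_{i+1}$.
\begin{itemize}
\item If both vertices lie in the same branch, then $P(u_i,u_{i+1})$ stays inside that branch and gives no constraint at $x$.
\item If $\beta(u_i)=j\neq j'=\beta(u_{i+1})$, then $P(u_i,u_{i+1})$ passes through $x$, entering via $e_j$ and leaving via $e_{j'}$. Increasing times along the journey give $t(e_j)<t(e_{j'})$.
\end{itemize}
Now delete consecutive repetitions from the cyclic sequence of branch indices. What remains is a cyclic sequence $j_1,j_2,\dots,j_m,j_1$ with $m\ge 2$ in which consecutive entries are distinct. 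Chaining the inequalities around this cycle yields $t(e_{j_1})<t(e_{j_2})<\dots<t(e_{j_m})<t(e_{j_1})$, which is a contradiction. Hence $\mathcal{T}_\downarrow[S]$ is connected.

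\textbf{Main obstacle.} The step needing the most care is the first one: justifying that a journey in a tree uses exactly the tree path and not some longer walk. This is what ensures the walk enters and leaves $x$ through the specific edges $e_j,e_{j'}$, and that steps within a single branch never touch $x$. It follows because any walk between two vertices of a tree traverses every edge of their path, and retracing an edge is impossible in a strict journey when each footprint edge has only one appearance time.
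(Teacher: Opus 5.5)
Your proof is correct and takes essentially the same route as the paper. Both pick a vertex outside $S$ whose removal separates vertices of $S$, and both note that each step of $W$ between different branches forces the entry edge at that vertex to appear before the exit edge. Both then get a contradiction from $W$ being closed: the paper sorts the edge times and finds a step from a later branch to an earlier one, while you chain the strict inequalities around the cycle. You also spell out two points the paper leaves implicit: a strict journey in a tree with one label per edge must follow the unique tree path, and the separating vertex is not in $S$.
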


\begin{proof}
Assume by contradiction that $\mathcal{T}_\downarrow[S]$ is not connected. Then, there exists a vertex $w$ such that deleting $w$ from $\mathcal{T}_\downarrow$ creates at least two connected components containing at least one vertex of $S$ each. Let $C_0,\dots,C_\ell$ be the connected components obtained by deleting $w$. For $j\in [0,\ell]$, we denote $(\{w,s_j\},t_j)$ the temporal edge connecting $w$ to $C_j$ in $\mathcal{T}$. Without loss of generality, assume that $t_0<\dots<t_\ell$. As $W$ is a closed walk, we have that there exists $i\in [0,k-1]$ such that $u_i\in C_p$ and $u_{i+1}\in C_q$ with $q<p$. As $(u_i,u_{i+1})\in R$ and the only path from $u_{i}$ to $u_{i+1}$ in $\mathcal{T}_\downarrow$ contains the edge $\{w,s_p\}$ followed by the edge $\{w,s_q\}$, we must have $t_p<t_q$.\qedhere
 
\end{proof}

Note that having the arcs $(u,v)$ and $(v,u)$ in the connected request graph implies that the edge $\{u,v\}$ must be in the footprint of any tree solution. In the following, we will call such an edge $\{u,v\}$ a {\it forced edge}. Another direct consequence of this lemma is the following:

\begin{corollary}
Let $R=(V,A)$ be a request graph such that $R$ is connected and there exists a tree solution $\mathcal{T}$. Let $W$ and $W'$ be two closed walks of $R$ intersecting on vertex set $S$. Then, $\mathcal{T}_{\downarrow}[S]$ is connected. 
\end{corollary}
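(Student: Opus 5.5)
The plan is to apply Lemma~\ref{lemma:closed_walk} to each walk separately and then use the fact that, in a tree, the intersection of two subtrees is again a subtree.

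\textbf{Step 1.} Let $S_W$ and $S_{W'}$ be the vertex sets of $W$ and $W'$, so that $S=S_W\cap S_{W'}$. By Lemma~\ref{lemma:closed_walk}, both $\mathcal{T}_\downarrow[S_W]$ and $\mathcal{T}_\downarrow[S_{W'}]$ are connected. Since $\mathcal{T}_\downarrow$ is a tree, they are subtrees of it.

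\textbf{Step 2.} Show that the intersection of two connected vertex sets of a tree induces a connected subgraph. Take $x,y\in S$. In the tree $\mathcal{T}_\downarrow$ there is a unique path $P$ between $x$ and $y$. Because $\mathcal{T}_\downarrow[S_W]$ is connected, it contains some path between $x$ and $y$. That path is also a path of $\mathcal{T}_\downarrow$, so by uniqueness it equals $P$, and hence every vertex of $P$ lies in $S_W$. The same argument shows every vertex of $P$ lies in $S_{W'}$. So all vertices of $P$ lie in $S$, and $P$ is a path in $\mathcal{T}_\downarrow[S]$.

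\textbf{Step 3.} Since any two vertices of $S$ are joined by a path in $\mathcal{T}_\downarrow[S]$, this induced subgraph is connected. If $S$ is empty or a single vertex, the statement is trivially true (reading ``connected'' loosely for the empty set), so I will add one sentence to cover that case.

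There is no real obstacle. The only point that needs care is the uniqueness of paths in a tree in Step 2, which is what forces the path $P$ to stay inside both vertex sets.
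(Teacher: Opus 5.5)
Your proof is correct and is essentially the paper's approach: the paper gives no separate argument and simply calls the corollary a direct consequence of Lemma~\ref{lemma:closed_walk}. The step it leaves implicit is exactly yours, namely that two vertex sets each inducing a subtree of the tree $\mathcal{T}_\downarrow$ have an intersection that also induces a connected subgraph, and your uniqueness-of-paths argument establishes this (with the degenerate case $S=\emptyset$ handled as you note).
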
 

Lemma~\ref{lemma:closed_walk} leads to a rather natural definition of what we call a tree representation. 

\begin{definition}A {\it tree representation} of a request graph $R=(V,A)$ is a tree $T$ on same vertex set $V$ and such that $T[S]$ is connected for each closed walk $W$ of $R$ with vertex set $S$. 
\end{definition}
Note that the footprint of a tree solution for a request graph $R$ is always a tree representation of $R$. One could hope that having a tree representation of a request graph $R$ is equivalent to having a tree solution for MinCRS. However, this is not the case. Indeed, consider the request graph $R=(V,A)$ depicted in Figure~\ref{fig:c-ex_tree-rep}. Any tree on $V$ with edges $\{a,b\}$, $\{b,c\}$, $\{c,d\}$, $\{d,e\}$ and an edge between $f$ and any other node is a tree representation of $R$. Now, if a tree solution exists for MinCRS, it must contain the edges $\{a,b\}$, $\{b,c\}$, $\{c,d\}$ and $\{d,e\}$, as they are forced edges. If the last edge is between $f$ and $a$, $b$ or $c$, the time appearance on the edge $\{e,d\}$ must be smaller than the one on $\{d,c\}$ in order to satisfy the request from $e$ to $f$ in $R$. However, the request from $c$ to $e$ in $R$ enforces the time appearance on the edge $\{e,d\}$ to be greater than the one on $\{d,c\}$. By symmetry, we conclude that it is also not possible to have the edge between $f$ and $d$ or $e$. Thus, there is no tree solution for MinCRS. 

    \begin{figure}[!ht]
      \centering
        \begin{tikzpicture}[node distance={20mm}, thick, main/.style = {draw=none, circle, minimum size=0.8cm}]
          \node[main] (a) [] {$a$};
          \node[main] (b) [right = 1cm of a] {$b$};
          \node[main] (c) [right= 1cm of b] {$c$};
          \node[main] (d) [right= 1cm of c] {$d$};
          \node[main] (e) [right = 1cm of d] {$e$};
         \node[main] (f) [below = 1.5cm of c] {$f$};

        \draw[->] (b) to[out=170,in=10] (a) ;
        \draw[->] (a) to[out=-10,in=-170] (b) ;
        \draw[->] (c) to[out=170,in=10] (b) ;
        \draw[->] (b) to[out=-10,in=-170] (c) ;
        \draw[->] (d) to[out=170,in=10] (c) ;
        \draw[->] (c) to[out=-10,in=-170] (d) ;
        \draw[->] (e) to[out=170,in=10] (d) ;
        \draw[->] (d) to[out=-10,in=-170] (e) ;
         \draw[->] (e) -- (f);
        \draw[->] (a) -- (f);
        \draw[->] (c) to[out=30,in=150] (e) ;
        \draw[->] (c) to[out=150,in=30] (a) ;
        \end{tikzpicture}
      \caption{A request graph that admits a tree representation but does not admit a tree solution for the MinCRS problem.}
\label{fig:c-ex_tree-rep}
    \end{figure}
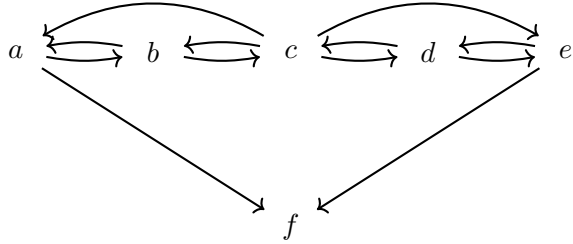

However, we will prove that, when the request graph is strongly connected, having a tree representation is {\it equivalent} to having a tree solution for MinCRS.

\subsection{Hypertree and Helly property}

Before showing our result, we will define and introduce useful concepts and notations. 

\begin{definition}[Hypergraph]
A hypergraph is a pair $H=(V,E)$ where $V$ is a (finite) set of vertices and $E=\{E_1,\dots,E_m\}$ is a set of (distinct) hyperedges. A hyperedge $E_i$ is a nonempty subset of $V$. 
\end{definition} 

Our definition of a tree representation of a request graph is a special case of the concept of hypertree introduced in~\cite{flament1978} (see~\cite{difonzo2025} for a survey on the topic). 

\begin{definition}[Hypertree]
A hypergraph $H=(V,E)$ is a hypertree if it admits a tree $T$ with same vertex set and such that for every hyperedge $E_i$ of $H$, $T[E_i]$ is connected. Such a tree $T$ is called a host tree. 
\end{definition}

Indeed, for a request graph $R=(V,A)$, we can define the hypergraph $H_R$, called the {\it closed walk hypergraph} of $R$, with same vertex set $V$ and for each closed walk $W$ of $R$ on vertex set $S$, $S$ is a hyperedge of $H_R$. Then, a host tree of $H_R$ is a tree representation of $R$ and vice-versa. 

In~\cite{flament1978}, the author proposes a characterisation of hypertrees. Before stating this characterisation, we introduce the definitions needed. 

    \begin{definition}[Helly Property]
        Let $\mathcal{S}=\{S_1,\dots,S_p\}$ be a set of subsets of a given ground set $S$. We say that $\mathcal{S}$ has the Helly property if for all subset $\mathcal{S}'\subseteq \mathcal{S}$:
        
        $$\left( \forall S_i,S_j \in \mathcal{S}' \ S_i\cap S_j\neq \emptyset \right) \Longrightarrow \cap_{S_i\in \mathcal{S}'} S_i \neq \emptyset$$ 
    \end{definition}
    So, for any subset of pairwise non disjoints sets, these sets have an element in common. It is well known for instance that the set of subtrees of a given tree has the Helly property.  
    
    We say that a hypergraph $H=(V,E)$ is {\it Helly} if its set of hyperedges $E$ has the Helly property. This leads us to the following definition. 
    \begin{definition}[walk-Helly]
    We say that a digraph $R$ is {\it walk-Helly} if the set of vertices of the closed walks of $R$ has the Helly property, i.e., any set of pairwise non-vertex-disjoint closed walks all share a common vertex. 
    \end{definition} 
    Equivalently, $R$ is walk-Helly if the closed walk hypergraph of $R$ is Helly.
    
    \begin{example}\label{ex:not-walk-helly}
 
    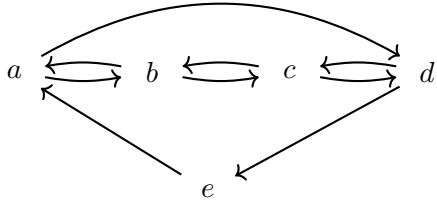
\begin{figure}[!ht]
      \centering
        \begin{tikzpicture}[node distance={20mm}, thick, main/.style = {draw=none, circle, minimum size=0.8cm}]
          \node[main] (a) [] {$a$};
          \node[main] (b) [right = 1cm of a] {$b$};
          \node[main] (c) [right= 1cm of b] {$c$};
          \node[main] (d) [right= 1cm of c] {$d$};
         \node[main] (e) [below left = 1cm and 0.5cm of c] {$e$};

        \draw[->] (b) to[out=170,in=10] (a) ;
        \draw[->] (a) to[out=-10,in=-170] (b) ;
        \draw[->] (c) to[out=170,in=10] (b) ;
        \draw[->] (b) to[out=-10,in=-170] (c) ;
        \draw[->] (d) to[out=170,in=10] (c) ;
        \draw[->] (c) to[out=-10,in=-170] (d) ;
         \draw[->] (d) -- (e);
        \draw[->] (e) -- (a);
        \draw[<-] (d) to[out=150,in=30] (a) ;
        \end{tikzpicture}
\caption{A request graph that is not walk-Helly.}
\label{fig:not-walk-helly}
    \end{figure} 
    Let us consider the request graph $R$ depicted in Figure~\ref{fig:not-walk-helly}. We have that $R$ is not walk-Helly as we have the closed walks $W_1=(a,b,a)$, $W_2=(b,c,d,c,b)$ and $W_3=(a,d,e,a)$ such that $W_1\cap W_2\neq \emptyset$, $W_1\cap W_3\neq \emptyset$ and $W_2\cap W_3\neq \emptyset$ but $W_1\cap W_2\cap W_3= \emptyset$. 
    
    \hfill{} $\Diamond$ 
    \end{example}
    
    We conclude this section with a theorem by Flament that gives a necessary and sufficient condition for a hypergraph to be a hypertree \cite{flament1978}.
    
    Given a hypergraph $H=(V,E)$, its {\it line graph} $L(H)$ is the undirected graph whose vertex set is the set of hyperedges of $H$ and there exists an edge $\{E,E'\}$ in $L(H)$ if and only if the hyperedges $E$ and $E'$ intersect. 
    Recall that an undirected graph is chordal if all cycles of four or more vertices has a chord, that is an edge that is not an edge of the cycle that connects two vertices of the cycle.
    
    \begin{theorem}[Flament, 1978]
    \label{thm:hypertree}
    A hypergraph $H$ is a hypertree if and only if $H$ is Helly and $L(H)$ is chordal.
    \end{theorem}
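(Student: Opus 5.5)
The statement has two directions. Necessity reduces to standard facts about subtrees of a tree. For sufficiency, I would build the host tree from a clique tree of the chordal graph $L(H)$, using the Helly property to choose one representative vertex for each maximal clique.

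\emph{Necessity.} Let $T$ be a host tree of $H$. Then every hyperedge $E_i$ induces a subtree of $T$.
\begin{itemize}
\item \emph{Helly.} It is classical that subtrees of a tree have the Helly property. Given pairwise intersecting subtrees, take three of them. For each pair, pick a vertex in their intersection; the median of these three vertices in $T$ lies in all three subtrees. An induction on the number of subtrees, replacing two subtrees by their intersection, finishes the argument.
\item \emph{Chordality.} $L(H)$ is then an intersection graph of subtrees of a tree. Suppose, for contradiction, that $E_1,\dots,E_k$ with $k\ge 4$ is a chordless cycle in $L(H)$. Pick $x_i\in E_i\cap E_{i+1}$ (indices mod $k$). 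The union of the paths $T[x_{i-1},x_i]\subseteq E_i$ is a closed walk in $T$. Because $T$ is a tree, some vertex of this walk is reached from both sides, and it lies in two non-consecutive $E_i$. That gives a chord, a contradiction. This is Gavril's easy direction.
\end{itemize}

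\emph{Sufficiency.} Assume $H$ is Helly and $L(H)$ is chordal.

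1. By Gavril's theorem, a chordal graph admits a clique tree. This is a tree $K$ whose nodes are the maximal cliques of $L(H)$, and for each hyperedge $E$, the cliques containing $E$ form a nonempty subtree $K_E$ of $K$. I would take this as the one external ingredient; if needed, it can be obtained from a perfect elimination ordering.

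2. For every vertex $v\in V$, let $\mathcal E(v)$ be the set of hyperedges containing $v$. This set is a clique of $L(H)$.

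3. For each maximal clique $Q$, its hyperedges pairwise intersect. By the Helly property there is a vertex $x_Q\in\bigcap Q$. Then $Q\subseteq\mathcal E(x_Q)$, and since $\mathcal E(x_Q)$ is a clique, maximality gives $\mathcal E(x_Q)=Q$. Hence $x_Q$ determines $Q$, so distinct maximal cliques have distinct representatives.

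4. Define $T$ on $V$ in three steps:
\begin{itemize}
\item join $x_Q$ and $x_{Q'}$ whenever $QQ'$ is an edge of $K$, which gives a copy of $K$;
\item attach every other vertex $v$ with $\mathcal E(v)\neq\emptyset$ as a leaf of $x_{Q(v)}$, where $Q(v)$ is a maximal clique containing $\mathcal E(v)$;
\item attach every vertex lying in no hyperedge as a leaf of an arbitrary vertex.
\end{itemize}
The result is clearly a tree.

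5. Fix a hyperedge $E$. Its vertices are of two kinds.
\begin{itemize}
\item Representatives $x_Q$ with $x_Q\in E$. These are exactly those with $E\in Q$, i.e.\ $Q\in K_E$, so they induce a copy of the subtree $K_E$, which is connected.
\item Leaves $v\in E$. For these $E\in\mathcal E(v)\subseteq Q(v)$, so the neighbour $x_{Q(v)}$ of $v$ also lies in $E$.
\end{itemize}
Therefore $T[E]$ is connected, and $T$ is a host tree.

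\emph{Main obstacle.} The real content is the existence of the clique tree in step 1. The rest is bookkeeping, and the delicate point there is step 3: the Helly property makes the representatives well defined and injective, so that $K$ embeds into a tree on $V$.
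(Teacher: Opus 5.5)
The paper gives no proof of this statement: it cites Flament's theorem and only applies it, in Propositions~\ref{prop:1-2} and~\ref{prop:3-1}. So your argument is a different route rather than a reconstruction, and it is correct modulo the clique-tree theorem for chordal graphs.

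The sufficiency direction is handled well. The Helly property gives $x_Q$ with $\mathcal{E}(x_Q)=Q$, which makes $Q\mapsto x_Q$ injective. It also makes $\{x_Q : Q\in K_E\}$ exactly the representatives lying in $E$. Every other vertex of $E$ hangs off some $x_{Q(v)}\in E$.

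Under hypergraph duality, this is the classical proof that a conformal hypergraph with chordal $2$-section is $\alpha$-acyclic. The clique tree of $L(H)$ serves as a join tree for the dual hypergraph $\{\mathcal{E}(v)\}_{v\in V}$: its maximal members are the $\mathcal{E}(x_Q)$, and the non-maximal ones are hung as leaves.

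Compared with a bare citation, your route yields an explicit host tree built from a clique tree. In the paper's setting, however, the closed-walk hypergraph $H_R$ can have exponentially many hyperedges (for the complete digraph, every vertex set of size at least $2$ is one). So this construction would not by itself give a polynomial-time algorithm. That is why the paper builds the tree through authorized arcs and uses Flament's theorem only for existence in Lemma~\ref{lemma:find}. One small technicality: if $L(H)$ is disconnected, take a clique tree of each component and link them arbitrarily; the subtree property survives.

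One step in the necessity part should be tightened. You say some vertex of the closed walk ``is reached from both sides'' and lies in two non-consecutive $E_i$, but this is not an argument as written. The vertices where a closed walk in a tree turns back are junctions $x_i$, and for a chordless cycle these lie in $E_i$ and $E_{i+1}$ only, so they give no chord.

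The quickest fix uses what you have just proved:
\begin{itemize}
\item $E_3\cup\dots\cup E_k$ induces a connected subgraph of $T$, so it is itself a subtree.
\item The Helly property of subtrees, applied to $E_1$, $E_2$ and $E_3\cup\dots\cup E_k$, gives a common vertex $s$.
\item Take the least $i\ge 3$ with $s\in E_i$. Then $E_1E_3$ is a chord if $i=3$, and $E_2E_i$ is a chord if $i\ge 4$.
\end{itemize}
This is exactly the argument the paper uses in the proof of Proposition~\ref{prop:1-2} for closed walks. Alternatively, you can simply cite the easy direction of Gavril's theorem, which you already invoke.
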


\subsection{Strongly connected request graph}

The main result of this section is the following. 

\begin{theorem}
\label{thm:main}
Let $R$ be a strongly connected request graph. Then the three following statements are equivalent:
\begin{enumerate}
	\item $R$ is walk-Helly.
	\item There exists a tree representation of $R$.
	\item There exists a tree solution for MinCRS.
\end{enumerate}
Moreover, if such a tree solution exists, we can compute one in polynomial time. 
\end{theorem}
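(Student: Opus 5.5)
The implications $3\Rightarrow 2\Rightarrow 1$ are the easy ones, and I will do them first. For $3\Rightarrow 2$, it is enough to note that the footprint of a tree solution $\mathcal T$ is a tree on $V$. By Lemma~\ref{lemma:closed_walk}, $\mathcal T_\downarrow[S]$ is connected for the vertex set $S$ of every closed walk of $R$, so $\mathcal T_\downarrow$ is a tree representation. For $2\Rightarrow 1$, take a tree representation $T$. Every hyperedge of the closed walk hypergraph $H_R$ induces a subtree of $T$. Subtrees of a tree have the Helly property, so $H_R$ is Helly, which means $R$ is walk-Helly. (Alternatively, a tree representation is a host tree, so Theorem~\ref{thm:hypertree} gives directly that $H_R$ is Helly.) Neither step uses strong connectivity.

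The real work is $1\Rightarrow 3$, and this is where strong connectivity enters. I will use two structural facts. First, two closed walks sharing a vertex can be concatenated into a closed walk on the union of their vertex sets, so the hyperedges of $H_R$ are closed under union of intersecting pairs. Second, since $R$ is strongly connected, every arc $(u,v)$ lies on a closed walk, and $V$ itself is a hyperedge. It therefore suffices to consider circuits.

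I plan an induction on $n$ that produces a tree together with appearance times. The base case is a single vertex, or a single circuit. For the inductive step I look for a "hub" vertex $w$ as follows.
\begin{enumerate}
\item Look at the family of maximal proper hyperedges, or alternatively the circuits through a fixed vertex.
\item Using walk-Helly, show that any family of pairwise intersecting circuits has a common vertex.
\item Using concatenation, deduce that for a suitable $w$, the strongly connected components $D_1,\dots,D_m$ of $R-w$ interact only through $w$. That is, each closed walk leaving some $D_i$ must pass through $w$; otherwise one gets three pairwise intersecting walks with empty common intersection, as in Example~\ref{ex:not-walk-helly}.
\end{enumerate}
Each $D_i\cup\{w\}$ then induces a strongly connected, walk-Helly request graph on fewer vertices, as long as $m\ge 2$ or $w$ is peeled off appropriately. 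By induction each one has a tree solution $\mathcal T_i$. I glue the $\mathcal T_i$ at $w$ and order their time windows: I shift the labels of the $\mathcal T_i$ into disjoint intervals, in an order compatible with the acyclic order in which the $D_i$ are entered from and left towards $w$. Arcs inside $D_i\cup\{w\}$ are handled by $\mathcal T_i$. Arcs between different $D_i$ and $D_j$ go through $w$, and need the edge of $\mathcal T_i$ at $w$ to precede the edge of $\mathcal T_j$ at $w$. I expect this to be achievable because, by the Helly argument, the arcs between components are acyclic once $w$ is removed. The polynomial-time claim should then follow because the hub and the components can be found with SCC computations, and the recursion has depth at most $n$.

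The main obstacle is step 3: proving from walk-Helly alone that such a hub $w$ exists, with all cross-component requests consistently orderable around $w$. This matters especially because the recursion forces $w$'s incident edges in the subtrees, and their relative times must be compatible simultaneously. My first attempt is to take a circuit $C$ of maximum length and argue that every other circuit meets $C$ in a single vertex or is contained in it. This is a cactus-like statement, obtained by applying the Helly property to triples of circuits built from $C$ and two of its chords or ears. If this fails, I would fall back on Theorem~\ref{thm:hypertree}. I would show that union-closure together with Helly forces $L(H_R)$ to be chordal, take the resulting host tree, and then assign times leaf-inward along a rooted version of it, checking each request path against the ordering constraint in the proof of Lemma~\ref{lemma:closed_walk}.
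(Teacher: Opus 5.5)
Your easy directions are fine: $3\Rightarrow 2$ is exactly Lemma~\ref{lemma:closed_walk}, and your $2\Rightarrow 1$ via the Helly property of subtrees is more direct than the paper's appeal to Theorem~\ref{thm:hypertree}. The gap is $1\Rightarrow 3$. This is the actual content of the theorem, and you leave its key step open (you say so yourself about the hub). Each concrete route you sketch also breaks at an identifiable point.

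\emph{The cactus claim is false.} Take the bidirected path $v_1v_2v_3v_4$ plus the arcs $(v_1,v_3)$ and $(v_2,v_4)$. It has a tree solution: the path with times $1,2,3$ on $\{v_1,v_2\},\{v_2,v_3\},\{v_3,v_4\}$. So it is walk-Helly. Yet its longest circuits $(v_1,v_3,v_2,v_1)$ and $(v_2,v_4,v_3,v_2)$ share two vertices and neither contains the other.

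\emph{Restricting to circuits loses information.} The graph of Figure~\ref{fig:not-walk-helly} is Helly on its circuits, whose vertex sets are $\{a,b\},\{b,c\},\{c,d\},\{a,d,e\},\{a,b,c,d\},V$. Still, it has no tree solution, because the bad triple uses the non-circuit walk $(b,c,d,c,b)$.

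\emph{The recursion does not close.} Take $R$ on $\{w,a,b,c,d\}$ with arcs $(w,a),(a,b),(b,c),(c,d),(d,w),(b,w)$. Here $R-w$ is a directed path, so every $D_i$ is a singleton and each piece $R[D_i\cup\{w\}]$ has at most one arc. These pieces are not strongly connected, so your induction hypothesis does not apply. For non-strongly-connected graphs, walk-Helly is genuinely insufficient (Figure~\ref{fig:c-ex_tree-rep}). Acyclicity between different $D_i$ is automatic from the condensation of $R-w$, so that is not where the difficulty lies. What gluing really needs is that, inside each $\mathcal T_i$, the tree path from the tail of a cross request to $w$ (and from $w$ to its head) is a journey. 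A solution of the piece does not give this unless you add those requests and re-prove walk-Helly for the enlarged piece. The polynomial-time claim inherits the same gap.

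The fallback fails more fundamentally: a host tree of a strongly connected walk-Helly graph need not admit any valid labelling. In the $5$-vertex example, the closed walks have vertex sets $\{w,a,b\}$ and $V$. Hence the path $T$ with edges $\{d,a\},\{a,w\},\{w,b\},\{b,c\}$ is a tree representation. But the request $(a,b)$ forces $t_{\{w,a\}}<t_{\{w,b\}}$, while $(c,d)$ forces $t_{\{b,c\}}<t_{\{w,b\}}<t_{\{w,a\}}<t_{\{a,d\}}$. Yet $R$ does have a tree solution: the star at $w$ with times $1,2,3,4$ on $\{w,a\},\{w,b\},\{w,c\},\{w,d\}$. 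So ``take Flament's host tree and label it leaf-inward'' cannot work. The missing idea is a mechanism for choosing the tree.

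The paper repeatedly adds authorized arcs. Lemma~\ref{lemma:authorized} shows this preserves walk-Helly under strong connectivity. Lemma~\ref{lemma:find} shows such an arc exists until the forced edges span. At the end, every tree edge is a $2$-cycle of $R$, and Lemma~\ref{lemma:tempo} exploits exactly this. A shortest cycle of ordering constraints sits around one vertex $s$. Because tree edges are bidirected requests, tree paths inside the branches of $T-s$ are paths of $R$. This yields a closed walk of $R$ that avoids $s$ but meets two branches, a contradiction. In the example above, $R$ has no path from $d$ to $a$ avoiding $w$, which is precisely why no such contradiction is available for an arbitrary host tree.
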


\begin{example} [Example~\ref{ex:not-walk-helly} continued]
Let us consider the request graph of Figure~\ref{fig:not-walk-helly} that is not walk-Helly. 
We can easily see that there exists no tree solution for MinCRS as, no matter where we put the last edge (the edges $(a,b),(b,c)$ and $(c,d)$ are forced), we cannot assign appearance times so that the requests $(a,d)$, $(d,e)$ and $(e,a)$ are satisfied. \hfill{} $\Diamond$ 
\end{example}

We will first show the equivalence between statements 1 and 2, that strongly relies on Theorem~\ref{thm:hypertree}. Note that for this equivalence, the connected request graph $R$ does not need to be strongly connected.

\begin{proposition}
\label{prop:1-2}
Let $R$ be a connected digraph. $R$ is walk-Helly if and only if there exists a tree representation of $R$. 
\end{proposition}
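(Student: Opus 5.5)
The plan is to reduce Proposition~\ref{prop:1-2} to Flament's characterisation (Theorem~\ref{thm:hypertree}) applied to the closed walk hypergraph $H_R$. By the discussion after the definition of hypertree, a tree representation of $R$ is exactly a host tree of $H_R$. Also, $R$ is walk-Helly iff $H_R$ is Helly. The direction ``tree representation $\Rightarrow$ walk-Helly'' is then immediate: if a tree representation exists, $H_R$ is a hypertree, so by Theorem~\ref{thm:hypertree} it is Helly, i.e.\ $R$ is walk-Helly. For the converse, it suffices to prove that if $H_R$ is Helly then $L(H_R)$ is chordal, and then apply Theorem~\ref{thm:hypertree} again.

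The key structural fact I will establish first is a closure property: if $S$ and $S'$ are hyperedges of $H_R$ with $S\cap S'\neq\emptyset$, then $S\cup S'$ is also a hyperedge. To see this, take closed walks $W$ and $W'$ with vertex sets $S$ and $S'$ and a common vertex $x$. Rotate both walks so that they start and end at $x$, and concatenate them. The result is a closed walk with vertex set $S\cup S'$. By induction, for any sequence of hyperedges $S_2,\dots,S_{k-1}$ in which consecutive ones intersect, the union $S_2\cup\dots\cup S_{k-1}$ is a hyperedge.

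Now suppose $H_R$ is Helly, and let $S_1,S_2,\dots,S_k$ with $k\geq 4$ be a cycle in $L(H_R)$, so that consecutive sets intersect (indices taken cyclically). I want to exhibit a chord. Let $A=S_2\cup\dots\cup S_{k-1}$, which is a hyperedge by the closure property. The three hyperedges $S_1$, $S_k$ and $A$ pairwise intersect:
\begin{itemize}
\item $S_1\cap S_k\neq\emptyset$, since $S_k$ and $S_1$ are consecutive on the cycle;
\item $S_1\cap A\supseteq S_1\cap S_2\neq\emptyset$;
\item $S_k\cap A\supseteq S_{k-1}\cap S_k\neq\emptyset$.
\end{itemize}
By the Helly property there is a vertex $x\in S_1\cap S_k\cap A$, so $x\in S_j$ for some $j\in[2,k-1]$. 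There are two cases.
\begin{itemize}
\item If $j\geq 3$, then $S_1\cap S_j\ni x$ gives the edge $\{S_1,S_j\}$. Since $3\leq j\leq k-1$, the hyperedges $S_1$ and $S_j$ are not consecutive on the cycle, so this edge is a chord.
\item If $j=2$, then $x\in S_2\cap S_k$ gives the edge $\{S_2,S_k\}$. Since $k\geq4$, the hyperedges $S_2$ and $S_k$ are not consecutive on the cycle, so this edge is a chord.
\end{itemize}
In both cases the cycle has a chord, so $L(H_R)$ is chordal, and Theorem~\ref{thm:hypertree} yields a host tree, i.e.\ a tree representation of $R$.

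The main point needing care is the bookkeeping around $H_R$ itself. Its hyperedges are the vertex sets of closed walks, so distinct walks with the same vertex set give the same hyperedge. One must also decide whether trivial closed walks give singleton hyperedges. Singletons only add simplicial-like vertices to $L(H_R)$ and cannot break the Helly property, so either convention is harmless. A vertex lying on no closed walk only needs to be attached somewhere in the host tree, which Theorem~\ref{thm:hypertree} already allows for. I expect the chord argument above, built on the closure-under-union property, to be the heart of the proof. The remaining steps are direct invocations of Theorem~\ref{thm:hypertree}.
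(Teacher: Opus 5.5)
Your proof is correct and takes essentially the same route as the paper. Both arguments merge the closed walks along part of the cycle into one hyperedge by concatenation: you use $S_2\cup\dots\cup S_{k-1}$ against the pair $S_1,S_k$, while the paper uses $E_3\cup\dots\cup E_k$ against $E_1,E_2$. Both then apply the Helly property to the three pairwise-intersecting hyperedges to force a chord, and invoke Flament's theorem for both directions.
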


\begin{proof}
Let $H_R$ be the closed walk hypergraph of $R$. 

$\Rightarrow$: We show that $R$ being walk-Helly implies that $L(H_R)$ is chordal. Let $C=(E_1,\dots,E_k,E_1)$ be a cycle in $L(H_R)$ with $k\geq 4$ and such that $C$ does not have a chord. Recall that $E_i$ is a set of vertices that forms a closed walk in $R$ (with a slight abuse of notation, we use $E_i$ to denote the vertex of $L(H_R)$ but also the associated hyperedge in $H_R$). Let $S=\bigcup_{i\in [3,k]} E_i$. By concatenating the closed walks $E_3,\dots,E_k$ (possible as $E_i$ intersects $E_{i+1}$ for each $i\in [3,k-1]$), we obtain a closed walk $W=(u_1,\dots,u_\ell)$ in $R$ with $\{u_1,\dots,u_\ell\}=S$. First, we have $E_1\cap E_2\neq \emptyset$. As $E_1\cap E_k\neq \emptyset$, we have $E_1\cap S\neq \emptyset$ and as $E_2\cap E_3\neq \emptyset$, we have $E_2\cap S\neq \emptyset$. As $H_R$ is Helly, $E_1\cap E_2\cap S \neq \emptyset$. Let $s\in E_1\cap E_2\cap S$. There exists $i\in [3,k]$ such that $s\in E_i$. 
Let us take such a $i$ that is minimal. If $i\neq 3$, we have that $s\in E_2\cap E_i$, and thus there is an edge between $E_i$ and $E_2$ in $L(H_R)$, contradicting the fact that $C$ has no chord. If $i=3$, we have that $s\in E_1\cap E_i$, which again gives us a chord between $E_1$ and $E_i$ in $C$. Thus, we have that $H_R$ is Helly and $L(H_R)$ is chordal. By Theorem~\ref{thm:hypertree}, $H_R$ is a hypertree, that is there exists a tree representation of $R$.  

$\Leftarrow$: We have that $H_R$ is a hypertree. By Theorem~\ref{thm:hypertree}, we thus have that $H_R$ is Helly, that is $R$ is walk-Helly. 
\end{proof}

Let us now prove the equivalence between statements 1 and 3 of Theorem~\ref{thm:main}. First, statement 3 trivially implies statement 1 (here again, the strong connexity of $R$ is not required). 

\begin{proposition}
\label{prop:3-1}
Let $R$ be a connected request graph. If there exists a tree solution for MinCRS, then $R$ is walk-Helly.
\end{proposition}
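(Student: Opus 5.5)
The plan is to go through the tree representation and the Helly property of subtrees of a tree, using results already established.

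Let $\mathcal{T}$ be a tree solution for MinCRS. The footprint $T=\mathcal{T}_\downarrow$ is a tree on $V$. By Lemma~\ref{lemma:closed_walk}, for every closed walk $W$ of $R$ with vertex set $S$, the induced subgraph $T[S]$ is connected. In other words, $T$ is a tree representation of $R$, which is exactly the remark made after the definition of tree representations.

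From here there are two possible routes.

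\textbf{Route 1 (via earlier results).} Since $T$ is a tree representation, Proposition~\ref{prop:1-2} (direction $\Leftarrow$) applies, and $R$ is walk-Helly. That direction only uses Theorem~\ref{thm:hypertree} and connectivity of $R$, both of which we have.

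\textbf{Route 2 (self-contained).} For each closed walk $W$ of $R$ with vertex set $S_W$, the set $S_W$ induces a subtree $T[S_W]$ of $T$. Now take a family of closed walks whose vertex sets pairwise intersect. The corresponding subtrees of $T$ then pairwise intersect. By the classical Helly property of subtrees of a tree, they share a common vertex. Hence the closed walks share a common vertex, and $R$ is walk-Helly.

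If I wanted to avoid citing that classical fact, I would prove it by induction on the number of subtrees.
\begin{itemize}
\item The case of three subtrees is the key one. For subtrees $T_1,T_2,T_3$, pick $x_{ij}\in T_i\cap T_j$. The median of $x_{12},x_{13},x_{23}$ in $T$ lies on all three connecting paths. Each such path lies inside one of the $T_i$, by subtree convexity, so the median lies in all three subtrees.
\item For the induction step, replace $T_1,T_2$ by $T_1\cap T_2$. This is again a subtree. It meets every other $T_j$ by the three-subtree case, and the family has one fewer member.
\end{itemize}

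There is no real obstacle here. The only care point is that Lemma~\ref{lemma:closed_walk} gives connectivity of $T[S]$, so each $S_W$ is indeed a subtree of $T$. Beyond that, everything reduces to standard facts or to Proposition~\ref{prop:1-2}.
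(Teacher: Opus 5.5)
Your proposal is correct, and Route~1 is exactly the paper's proof: Lemma~\ref{lemma:closed_walk} makes $\mathcal{T}_\downarrow$ a tree representation, so $H_R$ is a hypertree, and the easy direction of Theorem~\ref{thm:hypertree} then gives that $R$ is walk-Helly. Route~2 only unpacks that easy direction, since ``hypertree implies Helly'' is precisely the classical Helly property of subtrees of a tree (which the paper also cites as well known); it makes the argument self-contained without changing its structure.
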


\begin{proof}
Let $\mathcal{T}$ be such a tree solution. In particular, thanks to Lemma~\ref{lemma:closed_walk}, $\mathcal{T}_\downarrow$ is a tree representation of $R$, that is to say the closed walk hypergraph $H_R$ of $R$ is a hypertree. Thus, by Theorem~\ref{thm:hypertree}, $R$ is walk-Helly. 
\end{proof}

Finally, let us prove that statement 1 implies statement 3, which is the core of the demonstration. 
The proof will consist of three steps:
\begin{itemize}
    \item We introduce the notion of authorized arc and show that adding an authorized arc to a strongly connected request graph that is walk-Helly preserves the walk-Helly property (Lemma~\ref{lemma:authorized}).
    \item We show that, as long as there are not $n-1$ forced edges, we can find an authorized arc that is not in the request graph $R$, if $R$ is walk-Helly (Lemma~\ref{lemma:find}). 
    \item We finally show that, when having $n-1$ forced edges in a request graph that is walk-Helly, we can assign labels to the forced edges in order to obtain a tree solution for MinCRS (Lemma~\ref{lemma:tempo}). 
\end{itemize}

Let $R$ be a connected request graph. In the following, we will write $P(a,b,\overline{c})$ to denote the existence of a path from $a$ to $b$ that does not contain $c$ in $R$, and $W(a,b,\overline{c})$ to denote the existence of a closed walk containing $a$ and $b$ but not $c$ in $R$, for $a,b,c$ vertices of $R$. We say that an arc $(u,v)$ is {\it authorized} if for every vertex $x$ of $R$ we do not have both $W(u,x,\overline{v})$ and $W(v,x,\overline{u})$.\footnote{Note that if an arc $(u,v)$ is not authorized, then the edge $\{u,v\}$ cannot be part of a tree solution. Indeed, having the edge in a tree solution would contradict Lemma~\ref{lemma:closed_walk} as we have $W(u,x,\overline{v})$ and $W(v,x,\overline{u})$ for some $x$.} 

\begin{example}\label{ex:algo}
 Consider the request graph $R=(V,A)$ depicted in Figure~\ref{fig:authorized-arcs}. 
 
    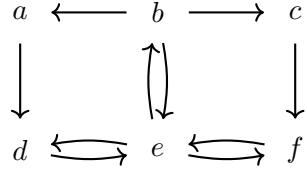
\begin{figure}[!ht]
      \centering
        \begin{tikzpicture}[node distance={20mm}, thick, main/.style = {draw=none, circle, minimum size=0.8cm}]
          \node[main] (a) [] {$a$};
          \node[main] (b) [right = 1cm of a] {$b$};
          \node[main] (c) [right= 1cm of b] {$c$};
          \node[main] (d) [below= 1cm of a] {$d$};
         \node[main] (e) [right = 1cm of d] {$e$};
         \node[main] (f) [right = 1cm of e] {$f$};
        \draw[->] (e) to[out=170,in=10] (d) ;
        \draw[->] (d) to[out=-10,in=-170] (e) ;
        \draw[->] (f) to[out=170,in=10] (e) ;
        \draw[->] (e) to[out=-10,in=-170] (f) ;
        \draw[->] (b) to[out=-80,in=80] (e) ;
        \draw[->] (e) to[out=100,in=-100] (b) ;
        \draw[->] (b) -- (a);
        \draw[->] (b) -- (c);
        \draw[->] (a) -- (d);
        \draw[->] (c) -- (f);
        \end{tikzpicture}
\caption{A strongly connected request graph $R$ that is walk-Helly and whose set of authorized arcs is $\{(a,b),(b,a),(a,d),(d,a),(a,e),(e,a),(c,b),(b,c),(c,e),(e,c),(c,f),(f,c)\}$ plus the arcs associated with forced edges.}
\label{fig:authorized-arcs}
    \end{figure}

 The arc $(a,c)$ is {\it not} authorized: indeed, there is a closed walk $(a,d,e,b,a)$ (so we have $W(a,b,\overline{c})$) and a closed walk $(c,f,e,b,c)$ (so we have $W(c,b,\overline{a})$). As a matter of fact, one can easily see that with the arc $(a,c)$ there would not be any tree solution.   
 
 On the other hand, the arc $(a,b)$ is authorized (as for instance every circuit containing $a$ contains $b$). More generally, the set of authorized arcs of $R$ is $$\{(a,b),(b,a),(a,d),(d,a),(a,e),(e,a),(c,b),(b,c),(c,e),(e,c),(c,f),(f,c)\}$$ \noindent plus the arcs associated with forced edges.\hfill{} $\Diamond$ 
\end{example}
We first show that if $R$ is strongly connected and walk-Helly, then adding an authorized arc to $R$ preserves the property of being walk-Helly. For instance, we have that the strongly connected request graph $R$ of Figure~\ref{fig:authorized-arcs} is walk-Helly and adding the authorized arc $(a,b)$ preserves the walk-Helly property.

\begin{lemma}
\label{lemma:authorized}
Let $R$ be a strongly connected request graph that is walk-Helly, and $(u,v)$ be an authorized arc. Then the graph $R'$ obtained by adding the arc $(u,v)$ to $R$ is walk-Helly.  
\end{lemma}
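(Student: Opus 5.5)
We argue by contradiction on the closed walk hypergraph. Suppose $R'=R+(u,v)$ is not walk-Helly, and take a family $\mathcal F$ of pairwise intersecting closed walks of $R'$ with empty common intersection, with $|\mathcal F|$ minimal. Split $\mathcal F=\mathcal F_{\text{old}}\cup\mathcal F_{\text{new}}$, where $\mathcal F_{\text{old}}$ consists of closed walks of $R$ and $\mathcal F_{\text{new}}$ of closed walks using the arc $(u,v)$. Since $R$ is walk-Helly, $\mathcal F_{\text{new}}\neq\emptyset$. Every new walk contains both $u$ and $v$, so the new walks already pairwise intersect and share $u$ and $v$. The vertex set of a new walk $N$ decomposes as $\{u,v\}\cup\bigcup_j \sigma_j$, where each $\sigma_j$ is a walk of $R$ from $v$ to $u$. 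Shortcutting, each segment can be taken to be a path, so its internal vertices avoid $u$ and $v$.

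\textbf{Step 1: reduce to $R$.} Since $R$ is strongly connected, fix a path $Q$ from $u$ to $v$ in $R$. For each new walk $N$, the set $N\cup Q$ is the vertex set of a closed walk of $R$: close each segment $\sigma_j$ with $Q$ and concatenate at $u$. The family $\mathcal F_{\text{old}}\cup\{N\cup Q : N\in\mathcal F_{\text{new}}\}$ is pairwise intersecting in $R$, so by the walk-Helly property of $R$ it has a common vertex $z$. If $z$ lies in every new walk, then $z$ is a common vertex of $\mathcal F$, a contradiction. Otherwise $z\in Q\setminus\{u,v\}$, $z$ lies in every old walk, and $z$ is missing from some new walk $N_0$.

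\textbf{Step 2: use authorization.} Show that in this remaining case every old walk $E\in\mathcal F_{\text{old}}$ contains $u$ or $v$, and then that all of them contain the same one of the two. Since all new walks contain both $u$ and $v$, this gives a common vertex of $\mathcal F$, a contradiction. Fix $E\in\mathcal F_{\text{old}}$ avoiding both $u$ and $v$. Then:
\begin{itemize}
\item $E$ meets $N_0$ at some vertex $x$, which is an internal vertex of a segment $\sigma$ from $v$ to $u$;
\item $E$ contains $z\in Q$.
\end{itemize}
Follow $\sigma$ from $x$ to $u$, then $Q$ from $u$ to $z$, then $E$ from $z$ back to $x$. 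Check that this closed walk of $R$ avoids $v$, which gives $W(u,x,\overline{v})$. Symmetrically, follow $\sigma$ from $v$ to $x$, then $E$ from $x$ to $z$, then $Q$ from $z$ to $v$; check that this avoids $u$, which gives $W(v,x,\overline{u})$. Both holding at once contradicts the fact that $(u,v)$ is authorized, so $E$ contains $u$ or $v$.

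To get a single common vertex, suppose old walks $E_1\ni u$, $E_1\not\ni v$ and $E_2\ni v$, $E_2\not\ni u$. Since $E_1\cap E_2\neq\emptyset$, pick $y\in E_1\cap E_2$, and $y\neq u,v$. Traversing $E_1$ gives a closed walk through $u$ and $y$ avoiding $v$, so $W(u,y,\overline v)$; traversing $E_2$ gives $W(v,y,\overline u)$. Both hold, again contradicting authorization.

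\textbf{Main obstacle.} The delicate part is Step 2, checking that the concatenated walks really avoid $v$, respectively $u$. The pieces taken from $E$ and $\sigma$ are fine: $E$ avoids $u,v$ by assumption and the internal vertices of $\sigma$ avoid $u,v$. The trouble is that $Q$ ends at $v$ and starts at $u$, so the subpaths $Q[u,z]$ and $Q[z,v]$ must avoid the other endpoint. This holds if $Q$ is a path, but the choice of $z$ must be compatible with it. If needed, I will choose $Q$ as a shortest $u$–$v$ path, or replace $Q$ by a segment of $N_0$ reversed through $R$'s strong connectivity. In the worst case, $z$ can be chosen via the minimal counterexample to lie on a path avoiding the relevant vertex.
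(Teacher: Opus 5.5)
\textbf{There is a genuine gap in Step 2.} Your Step 1 is sound, and applying the walk-Helly property of $R$ once to the whole modified family is a neat idea. Step 2, however, rests on a false claim.

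When you shortcut each $v$--$u$ segment of a new walk to a path, you change its vertex set, and vertex sets are exactly what the Helly property is about. A vertex $x\in N_0\setminus\{u,v\}$ need not be an internal vertex of any $v$--$u$ path avoiding $u$ and $v$. It may lie only on a $u$--$u$ (or $v$--$v$, or $u$--$v$) sub-walk of $N_0$, e.g.\ $N_0=(u,v,u,x,u)$. For such an $x$, the walk ``$\sigma$ from $v$ to $x$, then $E$, then $Q$'' passes through $u$, so $W(v,x,\overline{u})$ is not obtained (symmetrically, $W(u,x,\overline{v})$ can fail). The point you flag as the main obstacle is not one: if $Q$ is a path and $z\in Q\setminus\{u,v\}$, then $Q[u,z]$ avoids $v$ and $Q[z,v]$ avoids $u$ automatically.

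\textbf{The gap cannot be patched locally.} Take $R$ on $\{u,v,w,z\}$ with arcs $(u,w),(w,u),(w,z),(z,w),(z,v),(v,u)$.
\begin{itemize}
\item It is strongly connected.
\item It is walk-Helly, since every closed walk contains $w$.
\item The arc $(u,v)$ is authorized, since the only arc leaving $v$ goes to $u$, so every closed walk through $v$ contains $u$.
\end{itemize}
Now set $Q=(u,w,z,v)$, $E=(w,z,w)$, and $N_0=(u,v,u,w,u)$ in $R'$. Every fact your Step 2 uses holds: $E$ avoids $u,v$; $z\in E\cap Q$; $z\notin N_0$; and $E\cap N_0=\{w\}$. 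Yet no vertex witnesses that $(u,v)$ is not authorized, and indeed $W(v,w,\overline{u})$ fails.

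So the contradiction cannot come from $E$, $N_0$, $Q$ and $z$ alone; it must use the other members of $\mathcal F$. In this example, no closed walk of $R'$ meets $E$ while missing $w$, which is what rules the configuration out. You need an additional extremal choice to exploit this.

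\textbf{How the paper handles it.} The paper picks the counterexample so that, beyond minimal size, the number of members using $(u,v)$ is minimal. Whenever the arc is dispensable for connecting two well-chosen vertices of $A_1$ inside $A_1$, it replaces $A_1$ by a closed walk of $R$ through them. This replacement, combined with the case analysis on which other $A_i$ contain $u$ or $v$, is what produces the missing one-sided paths $P(y,u,\overline{v})$ and $P(v,y,\overline{u})$. Your Step 1 is essentially the global version of the paper's observation that every $u$--$v$ path must meet a certain intersection. What you lack is that replacement mechanism for the vertices of $N_0$ that $v$ reaches only through $u$, or that reach $u$ only through $v$.
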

 
\begin{proof}
Suppose that $R'$ is not walk-Helly, and take $k$ closed walks $A_1,\dots,A_k$ that are pairwise non disjoint but $\cap_{i=1,\dots,k} A_i=\emptyset$. By taking the smallest such $k$, we can assume that any intersection of $k-1$ sets $A_i$ is non empty. Moreover, take such $k$ closed walks $A_1,\dots,A_k$, with $k$ minimal, such that the number of closed walks that use the arc $(u,v)$ is minimized. 

Since the graph $R$ is walk-Helly, then $(u,v)$ must be used in some of the walks, say in walk $A_1$ (so $u$ and $v$ are in $A_1$).

Let us first state an easy property: let $C$ be an intersection of some (at least one) $A_i$. If a vertex $t$ is not in $C$, then for any vertices $a,b$ in $C$ we have $W(a,b,\overline{t})$. Indeed, since $t$ is not in $C$, it is not in one of the $A_i$ (whose intersection is $C$), and we have such a closed walk in $A_i$.

Now we consider different cases.\\

\noindent {\bf Case 1: both $u$ and $v$ belong only to $A_1$ (and no other $A_i$)}.

Let $B=\cap_{i\geq 3} A_i$. Let $P$ be a path from $u$ to $v$ in $R$. Then $P$ must contain a vertex $x$ in $A_2\cap B$ (otherwise we could add $P$ to $A_1$ instead of $(u,v)$ and get that $R$ was not Helly). So we have $P(u,x,\overline{v})$ and $P(x,v,\overline{u})$.

Let $y \in A_1\cap A_2$ and $z\in A_1\cap B$. Considering $A_2$ we have $W(y,x,\overline{u})$ and $W(y,x,\overline{v})$ (since $u$ and $v$ are not in $A_2$). Similarly, considering $B$ and the previous property, since $u,v$ are not in $B$ we have $W(x,z,\overline{u})$ and $W(x,z,\overline{v})$. In other words, we can walk 'freely' (without going through $u$ or $v$) between $x$, $y$ and $z$.
 
Now let us look at $A_1$. 
\begin{itemize}
	\item If we have a closed walk containing $y$ and $z$ but not the arc $(u,v)$, then $R$ is not Helly, contradiction. So we must use $(u,v)$ either to go from $y$ to $z$ or to go from $z$ to $y$.
	\item If we must use $(u,v)$ to go from $y$ to $z$, then we have $P(y,u,\overline{v})$ and $P(v,z,\overline{u})$. But then 
we have $P(x,u,\overline{v})$ - so $W(u,x,\overline{v})$ -  and $P(v,x,\overline{u})$ - so $W(v,x,\overline{u})$, and $(u,v)$ would not have been authorized.
	\item  The case where we must use $(u,v)$ to go from $z$ to $y$ is completely similar.\\
\end{itemize}
 
\noindent {\bf Case 2: there exists $i\geq 2$ such that $v$ belongs to $A_i$ but not $u$}.

We set w.l.o.g. $i=2$, i.e. $v\in A_2$ and $u\not\in A_2$. There exists $A_i$ such that $v\not\in A_i$. Let $C$ be the intersections of all $A_i$ that do not contain $v$. 

Let $P$ be a path from $u$ to $v$ in $R$. Suppose it does not contain any vertex in $A_2\cap C$. Then, we could add $P$ to $A_1$ instead of $(u,v)$ and this contradicts our choice of $A_1,\dots, A_k$.  

So $P$ must contain a vertex $x\in A_2\cap C$. We have $P(u,x,\overline{v})$ and $P(x,v,\overline{u})$.

Considering $A_2$, we directly have $W(x,v,\overline{u})$. 

Now let us consider two cases:
\begin{itemize}
	\item If $u$ is in $C$, then we directly have $W(u,x,\overline{v})$, contradicting the fact that $(u,v)$ is authorized.
	\item If $u$ is not in $C$,
 let $y\in A_1\cap C$. As $v$ is not in $C$, we have $W(x,y,\overline{v})$. 

Let us look at the journeys in $A_1$ (i.e. using only vertices in $A_1$). There is a path from $v$ to $y$ that does not use $(u,v)$. If there is a path from $y$ to $v$ that does not use $(u,v)$, then $(u,v)$ is useless in $A_1$ and this contradicts our choice of $A_1,\dots, A_k$ (take as $A_1$ a closed walk containing $v$ and $y$: $A_1$ still intersects all $A_i$ as they either contain $v$ or $y$). Otherwise all paths from $y$ to $v$ use $(u,v)$, meaning that we have $P(y,u,\overline{v})$. Then we have  $P(x,u,\overline{v})$, so $W(u,x,\overline{v})$.

Again, impossible since $(u,v)$ is authorized.\\
\end{itemize}

\noindent {\bf Case 3: there exists $i\geq 2$ such that $u$ belongs to $A_i$ but not $v$}.

We fix w.l.o.g. $i=2$, i.e., $u\in A_2$ but $v\not\in A_2$. Symmetrically as in the previous case, let $C$ be the intersection of all $A_i$ that do not contain $u$.
As previously, we have some $x\in C\cap A_2$ with $P(u,x,\overline{v})$ and $P(x,v,\overline{u})$. 
 
By considering $A_2$, we have $W(u,x,\overline{v})$. 
\begin{itemize}
	\item If $v$ is in $C$, then we directly have (thanks to $C$) $W(x,v,\overline{u})$, and $(u,v)$ would not have been authorized.
	\item If $v\not\in C$. Let $y$ in $A_1\cap C$. We have $W(x,y,\overline{u})$. Let us look at the journeys in $A_1$ (i.e. using only vertices in $A_1$). We have a path from $y$ to $u$ that does not use $(u,v)$. If there is a path from $u$ to $y$ that does not use $(u,v)$ as well, then $(u,v)$ is useless in $A_1$ and this contradicts our choice of $A_1,\dots, A_k$. Then, every path from $u$ to $y$ uses $(u,v)$, meaning that we have $P(v,y,\overline{u})$. Then we have $P(v,x,\overline{u})$, and then $W(v,x,\overline{u})$. Contradiction again with the fact that $(u,v)$ is authorized. \\
\end{itemize}

\noindent  {\bf Case 4: there exists $i\geq 2$ such that both $u\in A_i$ and $v\in A_i$.}

We fix w.l.o.g. $i=2$, i.e., $u,v\in A_2$. Let $B=\cap_{i\geq 3} A_i$. Note that $u$ and $v$ are not in $B$.

Let $P$ be a path from $u$ to $v$ in $R$. $P$ must intersect $B$ (otherwise we could add $P$ to $A_1$ instead of $(u,v)$ and this contradicts our choice of $A_1,\dots, A_k$). Then we have some $x\in B$ with $P(u,x,\overline{v})$ and $P(x,v,\overline{u})$. 

Let $y\in A_1\cap B$. As $u$ and $v$ are not in $B$, we have a closed walk containing $x,y$ but not $u$ and a closed walk containing $x,y$ but not $v$. 

Let us look at the journeys in $A_1$ (i.e. using only vertices in $A_1$). There is a path from $v$ to $y$ without $(u,v)$. Then:
\begin{itemize}
\item If there is a path from $y$ to $v$ without $u$: there is in $A_1$ a closed walk containing $y$ and $v$ that does not contain $(u,v)$. This contradicts our choice of $A_1,\dots, A_k$. 
\item Otherwise, every path from $y$ to $v$ contains $u$. So we have $P(y,u,\overline{v})$, and then $W(x,u,\overline{v})$. Then: either there is $P(v,y,\overline{u})$ and we get $W(v,x,\overline{u})$ (and $(u,v)$ would not have been authorized), or we have in $A_1$ a path $P(u,y,\overline{v})$. But in this latter case we have in $A_1$ a closed walk containing $u$ and $y$ but not $v$. Then we restrict $A_1$ to be this closed walk, and this contradicts our choice of $A_1,\dots, A_k$. \qedhere

\end{itemize}
\end{proof}

The general idea of the algorithm that builds a tree solution if it exists is to find an authorized arc $(u,v)$ that is not in $R$ and add the arcs $(u,v)$ and $(v,u)$ to $R$ until there are $n-1$ forced edges forming a tree. Note that we can add both the arcs $(u,v)$ and $(v,u)$ as if $(u,v)$ is authorized, then $(v,u)$ is also authorized, and adding the arc $(u,v)$ to $R$ does not change the fact that $(v,u)$ is authorized by definition of an authorized arc. The following lemma shows that, as long as we do not have $n-1$ forced edges forming a tree, we can find an authorized arc $(u,v)$ when $R$ is walk-Helly. 

\begin{example} [Example~\ref{ex:algo} continued]
Let us consider our running example of Figure~\ref{fig:authorized-arcs}. We can add the authorized arc $(a,b)$ to $R$ to obtain the request graph $R'$ whose set of authorized arcs becomes $\{(c,b),(b,c),(c,e),(e,c),(c,f),$ $(f,c)\}$ plus the arcs associated with forced edges. We can see that the set of authorized arcs has decreased ($(a,d),(d,a),(a,e)$ and $(e,a)$ are no longer authorized) but we can still find an authorized arc. \hfill{} $\Diamond$
\end{example}

\begin{lemma}
\label{lemma:find}
Let $R$ be a strongly connected request graph that is walk-Helly and such that there are not $n-1$ forced edges forming a tree. Then, there exists an authorized arc $(u,v)$ that is not in $R$. 
\end{lemma}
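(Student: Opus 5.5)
We use Proposition~\ref{prop:1-2} to fix a tree representation $T$ of $R$. We then show that some edge $\{u,v\}$ of $T$ is not forced and that $(u,v)$ or $(v,u)$ is an authorized arc missing from $R$.

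\textbf{Step 1: the forced edges lie in $T$.} If $\{u,v\}$ is forced, then $(u,v,u)$ is a closed walk of $R$. Hence $T[\{u,v\}]$ is connected, so $\{u,v\}\in T$. The forced edges are thus a forest contained in the tree $T$. If all $n-1$ edges of $T$ were forced, we would have $n-1$ forced edges forming a tree, which the hypothesis excludes. So $T$ has an edge $\{u,v\}$ that is not forced, and at least one of $(u,v)$, $(v,u)$ is not in $R$. Say $(u,v)\notin R$; otherwise we swap the names.

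\textbf{Step 2: every edge of a tree representation gives authorized arcs.} Both arcs $(u,v)$ and $(v,u)$ are authorized, for the following reason. Suppose some vertex $x$ satisfies both $W(u,x,\overline{v})$ and $W(v,x,\overline{u})$, witnessed by closed walks with vertex sets $S_1\ni u,x$ (avoiding $v$) and $S_2\ni v,x$ (avoiding $u$). By definition of a tree representation, $T[S_1]$ and $T[S_2]$ are connected subtrees.

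Removing the edge $\{u,v\}$ splits $T$ into two components $T_u\ni u$ and $T_v\ni v$. The subtree $T[S_1]$ contains $u$ but not $v$, so it cannot use the edge $\{u,v\}$; it is therefore contained in $T_u$, which gives $x\in T_u$. By the same argument applied to $S_2$, we get $x\in T_v$. This is a contradiction.

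This step is exactly the footnote's argument, run with a tree representation in place of a tree solution. It applies here because the footnote only relied on Lemma~\ref{lemma:closed_walk}, and the defining property of a tree representation is that same conclusion.

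\textbf{Main obstacle.} The delicate point is Step 1, and specifically making sure the non-forced tree edge yields an arc that is not in $R$. A non-forced edge means that not both arcs are present in $R$, so at least one of them is missing. Step 2 shows that both orientations are authorized, so the missing one is an authorized arc outside $R$, as required. Strong connectivity is not needed in this argument beyond its role in the surrounding framework. The argument also produces the arc constructively from $T$, provided a host tree can be computed; we defer that algorithmic question to the polynomial-time claim of Theorem~\ref{thm:main}.
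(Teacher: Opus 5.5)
Your proof is correct. It reaches the key step, that the missing orientation of a non-forced edge $\{u,v\}$ of $T$ is authorized, by a different and more direct route than the paper.

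\textbf{The paper's route.} It argues by contradiction through an augmented request graph $R'$, obtained by adding both orientations of every edge of $T$ to $R$. It asserts that $T$ is still a tree representation of $R'$, and deduces from Proposition~\ref{prop:1-2} that $R'$ is walk-Helly. It then notes that the walks witnessing $W(u,x,\overline{v})$ and $W(v,x,\overline{u})$, together with $(u,v,u)$, pairwise intersect but have empty common intersection.

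\textbf{Your route.} You stay inside $T$: the two witnessing walks induce subtrees lying on opposite sides of the edge $\{u,v\}$, so they cannot both contain $x$. Once $T$ is fixed, this authorization step is elementary. It needs neither strong connectivity nor a second appeal to Proposition~\ref{prop:1-2}, and it gives both orientations at once. Your Step~1 also supplies the (easy) justification that a non-forced edge of $T$ exists, which the paper leaves implicit.

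\textbf{A weak point your argument avoids.} The paper's claim that $T$ remains a tree representation of $R'$ is not justified there, and it can fail even when $R$ is strongly connected. Take $T$ to be the path $a_0b_0wb_1a_1$ and $R$ to have arcs $(a_0,b_1),(b_1,a_1),(a_1,b_0),(b_0,w),(w,a_0),(w,b_1)$.
\begin{itemize}
\item $R$ is strongly connected.
\item $R-w$ is acyclic, so every closed walk of $R$ with at least one arc has vertex set $V$ or $V\setminus\{a_0\}$.
\item Hence $T$ is a tree representation of $R$.
\item Once the tree arc $(b_0,a_0)$ is added, the closed walk $(a_0,b_1,a_1,b_0,a_0)$ avoids $w$ and induces a disconnected subgraph of $T$.
\end{itemize}
The natural repair of the paper's argument replaces the walk-Helly property of $R'$ by the Helly property of subtrees of $T$, applied to $S_1$, $S_2$ and $\{u,v\}$. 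That is exactly your argument in different words.
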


\begin{proof}
Let $T$ be a tree representation of $R$ (exists as $R$ is walk-Helly and by Proposition~\ref{prop:1-2}) and let $\{u,v\}$ be an edge of $T$ that is not a forced edge in $R$. We have that $(u,v)$ or $(v,u)$ (or both) are not in $R$. Without loss of generality, let us assume that $(u,v)$ is not in $R$ and let us show that the arc $(u,v)$ is authorized. If $(u,v)$ is not authorized, there exists a vertex $x$ such that $W(u,x,\bar{v})$ and $W(v,x,\bar{u})$ in $R$. Consider the request graph $R'$ which is $R$ to which we add the arcs $(s,t)$ and $(t,s)$ for each edge $\{s,t\}$ of $T$ (if not already in $R$). We have that $T$ is a tree representation of $R'$, so $R'$ is walk-Helly by Proposition~\ref{prop:1-2}. However, we have $W(u,x,\bar{v})$ and $W(v,x,\bar{u})$ in $R'$ and also the closed walk consisting of the arc $(u,v)$ and the arc $(v,u)$ in $R'$, contradicting the fact that $R'$ is walk-Helly. 
\end{proof}

We finally show that, when we have $n-1$ forced edges forming a tree, we can assign labels to the tree in order to obtain a solution for MinCRS if the request graph $R$ is walk-Helly. An illustration of the construction is given in an example after the proof. In Figure~\ref{fig:temp} the left part represents the request graph, and the right part represents a tree solution with labels (the middle part illustrates part of the proof).

\begin{lemma}
\label{lemma:tempo}
Let $R=(V,A)$ be an instance of MinCRS such that $R$ is connected and has $n-1$ forced edges forming a tree representation $T$ of $R$. Then, each edge of the tree representation can be given an appearance time so that the obtained temporal graph is a solution.  
\end{lemma}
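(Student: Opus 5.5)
The plan is to reduce the labelling problem to acyclicity of a precedence relation on the edges of $T$. For a request $(a,b)\in A$, let $a=w_0,w_1,\dots,w_m=b$ be the unique path from $a$ to $b$ in $T$. The only possible journey from $a$ to $b$ in a temporal graph with footprint $T$ follows this path. So the request is satisfied if and only if the appearance time of $\{w_{i-1},w_i\}$ is strictly smaller than that of $\{w_i,w_{i+1}\}$ for every $i\in[1,m-1]$. I will define an auxiliary digraph $D$ whose vertices are the $n-1$ edges of $T$. It has an arc $e\to e'$ whenever some request path traverses $e$ immediately followed by $e'$; such $e,e'$ always share a vertex of $T$. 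If $D$ is acyclic, I take a topological order of $D$ and give the $j$-th edge the appearance time $j$. Every constraint is then respected, so every request is satisfied and we get a solution with $n-1$ edges. It remains to show that $D$ has no circuit.

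First I will show that any circuit of $D$ can be assumed to live in a single star of $T$. Take a shortest circuit $e_0\to e_1\to\dots\to e_{\ell-1}\to e_0$ of $D$, with the $e_j$ distinct and consecutive edges sharing an endpoint. I claim all $e_j$ share a common vertex $w$. Otherwise, the cyclic sequence of edges, each consecutive pair meeting at a vertex, would produce a closed walk in $T$ that is not a back-and-forth traversal, hence a cycle in the tree $T$. Equivalently, a simple cycle in the line graph of a tree, which is a block graph whose blocks are the stars, lies inside one block. This is the step I expect to need the most care to write cleanly. If the general block-graph argument is awkward, I would instead argue on the circuit directly: pick the first place where the shared vertex changes and derive a cycle in $T$.

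So assume $e_j=\{w,s_j\}$ for all $j$, with the $s_j$ distinct and $\ell\ge 2$. Let $C_j$ be the component of $T-w$ containing $s_j$. The arc $e_j\to e_{j+1}$ comes from a request $(a_j,b_j)\in A$ with $a_j\in C_j$ and $b_j\in C_{j+1}$, indices taken mod $\ell$. Every edge of $T$ is forced, so both orientations of each edge of $T[C_{j+1}]$ are arcs of $R$. Since $C_{j+1}$ is a subtree, there is therefore a walk in $R$ from $b_j$ to $a_{j+1}$ that stays inside $C_{j+1}$ and avoids $w$. Concatenating the arcs $(a_j,b_j)$ with these connecting walks yields a closed walk of $R$. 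Its vertex set $S$ meets at least two distinct components of $T-w$ but does not contain $w$, so $T[S]$ is disconnected. This contradicts the hypothesis that $T$ is a tree representation of $R$. Hence $D$ is acyclic, and the topological labelling gives the desired solution. Note that only connectivity of $R$ and forcedness of the tree edges are used, not strong connectivity.
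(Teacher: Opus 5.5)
Your proof is correct and follows essentially the paper's route: the precedence digraph $D$ on the tree edges, a topological labelling when $D$ is acyclic, a reduction of any circuit of $D$ to edges sharing a common vertex $w$, and a closed walk of $R$ that passes through the components of $T-w$ (using the forced edges inside each component) while avoiding $w$, which contradicts $T$ being a tree representation. The only difference is the common-vertex step: you justify it via the block structure of the line graph of $T$ (equivalently, a closed walk in a tree that uses distinct edges cannot use any edge), which is cleaner than the paper's minimality argument on a shortest closed walk of $D$.
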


\begin{proof}
For a forced edge $\{u,v\}$, we denote $t_{\{u,v\}}$ the appearance time that we will give to this edge. First, observe that each arc $(u,v)\in R$ such that $\{u,v\}$ is not a forced edge enforces an order on the appearance times of the edges on the path from $u$ to $v$ in $T$. More precisely, let $P=(u_0=u,u_1,\dots,u_k=v)$ be the unique path from $u$ to $v$ in $T$ and let $e_i=\{u_i,u_{i+1}\}$ for $i\in [0,k-1]$. Then, we must have $t_{e_0}<t_{e_1}<\dots<t_{e_{k-1}}$. 

Let us construct the following digraph $D$. For each forced edge $e$ of $R$, we have a vertex $v_e$ in $D$. For each arc $(u,v)\in R$ such that $\{u,v\}$ is not a forced edge, let $P=(u_0=u,u_1,\dots,u_k=v)$ be the path from $u$ to $v$ in $T$ and $e_i=\{u_i,u_{i+1}\}$. Then, we put an arc from $v_{e_i}$ to $v_{e_{i+1}}$ in $D$ for $i\in [0,k-2]$. Note that having an arc from $v_{e}$ to $v_{e'}$ in $D$ means that $e$ and $e'$ are sharing an endpoint in $T$ and that we must have $t_e<t_{e'}$. If there is no closed walk in $D$, then assigning labels to the forced edges by following a topological order of $D$ ensures that the obtained temporal graph is a solution. 

Now, let us assume that there exists a closed walk in $D$. Let $W=(v_{e_0},v_{e_1},\dots,v_{e_k}=v_{e_0})$ be such a closed walk with minimal length. 
Let us show that all forced edges $e_i$ for $i\in [0,k-1]$ share a common endpoint in $T$. First, we have that $e_0$ and $e_1$ shares an endpoint, that we will denote $s$. Now, let us assume that all forced edges $e_i$ for $i\in [0,j]$ have $s$ as one endpoint, for $1\leq j \leq k-3$. We have that $e_{j+1}$ and $e_j$ have a common endpoint. Assume that this is not $s$. Then there exists a path in $T$ consisting of the edges $e_{j+1}$, $e_j$ and $e_{0}$ in this order. However, we have a path from $v_{e_{j+1}}$ to $v_{e_0}$ in $D$ and as there exists an arc from a vertex associated with a forced edge to another in $D$ only between forced edges sharing an endpoint, we have that $v_{e_j}$ should be in every path from  $v_{e_{j+1}}$ to $v_{e_0}$ in $D$, contradicting the minimality of $W$. Finally, $e_{k-1}$ shares an endpoint with $e_0$ and also with $e_{k-2}$ in $T$, which is thus $s$. 

Let $e_i=\{s,w_i\}$ for $i\in [0,k-1]$ and let $T_i$ be the subtree containing $w_i$ obtained when deleting vertex $s$ of $T$. As $(v_{e_i},v_{e_{i+1}})$ is an arc of $D$ for $i\in [0,k-1]$, there exists an arc in the request graph $R$ from a vertex $s_i\in T_i$ to a vertex $t_{i+1}\in T_{i+1}$. Also, there exists a path in $R$ from $t_i$ to $s_i$ for $i\in [1,k-1]$ (resp. from $t_k$ to $s_0$) obtained by following the path from $t_i$ to $s_i$ in $T_i$ (resp. from $t_k$ to $s_0$ in $T_0$). Concatenating those paths with the arcs from $s_i$ to $t_{i+1}$ gives a closed walk in $R$ that contains all $s_i$ and $t_i$ (and potentially other vertices) but that does not contain $s$. This contradicts the fact that $T$ is a tree representation of $R$. 
\end{proof}

From this proof, we can directly derive a polynomial time algorithm, called \textsc{Labelisation}, such that, given a connected request graph $R$ with $n-1$ forced edges, it assigns appearance times to the forced edges so that the resulting temporal graph is a solution for MinCRS, or outputs NO if it is not possible. 

\begin{example}[Example~\ref{ex:algo} continued]
Let us go back to our running example (Figure~\ref{fig:authorized-arcs}). Assume that after choosing to add the authorized arc $(a,b)$, we chose to add the authorized arc $(c,b)$. We then obtain a request graph with $n-1$ forced edges forming a tree representation (see Figure~\ref{fig:temp}, left). In the middle of Figure~\ref{fig:temp}, the associated digraph $D$ described in the proof of Lemma~\ref{lemma:tempo} is depicted. For instance the arc $(a,d)$ in the request graph gives the arcs $(v_{\{a,b\}},v_{\{b,e\}})$ and $(v_{\{b,e\}},v_{\{d,e\}})$ in $D$. We then construct a tree solution for MinCRS by giving appearance times to the forced edges according to the topological order of $D$ (Figure~\ref{fig:temp}, right).\hfill{} $\Diamond$
\end{example}

We can now conclude the proof of Theorem~\ref{thm:main} by showing that statement 1 implies statement 3. 

    \begin{figure}[!ht]
      \centering
        \begin{tikzpicture}[node distance={20mm}, thick, main/.style = {draw=none, circle, minimum size=0.55cm}]
          \node[main] (a) [] {$a$};
          \node[main] (b) [right = 0.5cm of a] {$b$};
          \node[main] (c) [right= 0.5cm of b] {$c$};
          \node[main] (d) [below= 0.5cm of a] {$d$};
         \node[main] (e) [right = 0.5cm of d] {$e$};
         \node[main] (f) [right = 0.5cm of e] {$f$};
        \draw[->] (b) to[out=170,in=10] (a) ;
        \draw[->] (a) to[out=-10,in=-170] (b) ;
        \draw[->] (c) to[out=170,in=10] (b) ;
        \draw[->] (b) to[out=-10,in=-170] (c) ;
        \draw[->] (e) to[out=170,in=10] (d) ;
        \draw[->] (d) to[out=-10,in=-170] (e) ;
        \draw[->] (f) to[out=170,in=10] (e) ;
        \draw[->] (e) to[out=-10,in=-170] (f) ;
        \draw[->] (b) to[out=-80,in=80] (e) ;
        \draw[->] (e) to[out=100,in=-100] (b) ;
        \draw[->] (a) -- (d);
        \draw[->] (c) -- (f);
 
        \node[main] (v1) [above right = 0.1cm and 0.03cm of c] {};
        \node[main] (v2) [below = 2 cm of v1] {};
        \node[main] (v3) [right = 6.35 cm of v1] {};
        \node[main] (v4) [below = 2cm of v3] {};
        \draw[-] (v1) -- (v2);
        \draw[-] (v3) -- (v4);
        
        \node[main] (vab) [below right = 0.1 cm and 0.5cm of c] {\small{$v_{\{a,b\}}$}};
          \node[main] (vbc) [right = 0.1cm of vab] {\small{$v_{\{b,c\}}$}};
          \node[main] (vbe) [right= 0.3cm of vbc] {\small{$v_{\{b,e\}}$}};
          \node[main] (vde) [right= 0.3cm of vbe] {\small{$v_{\{d,e\}}$}};
         \node[main] (vef) [right = 0.1cm of vde] {\small{$v_{\{e,f\}}$}};

        \draw[->] (vab) to[out=30,in=150] (vbe) ;
        \draw[->] (vbc) -- (vbe) ;
        \draw[->] (vbe) -- (vde) ;
        \draw[->] (vbe) to[out=-30,in=-150] (vef) ;

          \node[main] (a2) [right = 7cm of c] {$a$};
          \node[main] (b2) [right = 0.5cm of a2] {$b$};
          \node[main] (c2) [right= 0.5cm of b2] {$c$};
          \node[main] (d2) [below= 0.5cm of a2] {$d$};
         \node[main] (e2) [right = 0.5cm of d2] {$e$};
         \node[main] (f2) [right = 0.5cm of e2] {$f$};

       \draw[-] (a2) -- node[midway, above] {1} (b2) ;
       \draw[-] (b2) -- node[midway, above] {2} (c2) ;
       \draw[-] (b2) -- node[midway, left] {3} (e2) ;
       \draw[-] (d2) -- node[midway, below] {4} (e2) ;
       \draw[-] (e2) -- node[midway, below] {5} (f2) ;

        \end{tikzpicture}
\caption{On the left, a request graph with $n-1$ forced edges forming a tree representation. In the middle, a topological order of the digraph described in the proof of Lemma~\ref{lemma:tempo}. On the right, a tree solution for MinCRS.}
\label{fig:temp}
    \end{figure}
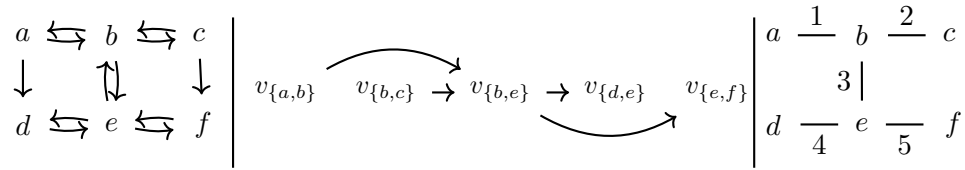

\begin{proposition}
Let $R$ be a strongly connected request graph. If $R$ is walk-Helly, then there exists a tree solution for MinCRS. Moreover, Algorithm~\ref{alg:tree_sol} allows for the computation of such a tree solution if it exists in polynomial time.
\end{proposition}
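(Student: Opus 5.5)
The plan is an iterative augmentation argument that combines Lemmas~\ref{lemma:authorized}, \ref{lemma:find} and \ref{lemma:tempo}.

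Start from $R_0=R$, which is strongly connected and walk-Helly. As long as the forced edges of the current graph $R_i$ do not form a spanning tree, Lemma~\ref{lemma:find} gives an authorized arc $(u,v)\notin R_i$. We add $(u,v)$, and then $(v,u)$ as well. By the remark before Lemma~\ref{lemma:find}, $(v,u)$ is still authorized after adding $(u,v)$. Applying Lemma~\ref{lemma:authorized} twice shows that $R_{i+1}=R_i\cup\{(u,v),(v,u)\}$ is still walk-Helly. Strong connectivity is trivially preserved, since we only add arcs.

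Each step creates a new forced edge $\{u,v\}$, so the number of forced edges strictly increases. The forced edges always form a forest: by Lemma~\ref{lemma:closed_walk}, applied via Proposition~\ref{prop:1-2}, every forced edge belongs to any tree representation, because the closed walk $(u,v,u)$ induces a connected subgraph only if $\{u,v\}$ is an edge of the tree. Hence the number of forced edges is at most $n-1$, and the process stops after at most $n-1$ iterations. It stops at a graph $R^*\supseteq R$ whose forced edges form a spanning tree $T$, and $T$ is a tree representation of $R^*$ by the same argument (the unique tree representation). Lemma~\ref{lemma:tempo} then labels $T$ so that it satisfies all requests of $R^*$, hence all requests of $R$. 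This produces a tree solution for $R$.

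For polynomiality, I need to check that each ingredient is polynomial.
\begin{itemize}
\item Testing whether an arc $(u,v)$ is authorized requires deciding $W(u,x,\overline{v})$ for all $x$. This holds iff $u$ and $x$ lie in the same strongly connected component of $R-v$, which is one SCC computation per pair. There are $O(n^2)$ candidate arcs, so finding an authorized arc not in $R_i$ is polynomial.
\item The loop runs at most $n-1$ times.
\item \textsc{Labelisation} (the digraph $D$ and its topological sort) is polynomial.
\end{itemize}
I would write Algorithm~\ref{alg:tree_sol} as this loop. It outputs NO whenever no authorized arc exists before $n-1$ forced edges are reached, or when \textsc{Labelisation} fails. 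Correctness of a NO answer follows from the contrapositive of the above combined with Proposition~\ref{prop:3-1}: if a tree solution exists then $R$ is walk-Helly, so the loop never gets stuck and labelling succeeds.

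The main obstacle is justifying that the algorithm never needs a tree representation explicitly. Lemma~\ref{lemma:find} uses one existentially, but the algorithm only searches for authorized arcs directly, so that is fine. A subtler point is the count of forced edges: I will argue that the forced edges stay acyclic because $R_i$ is walk-Helly and therefore admits a tree representation containing all forced edges. This guarantees termination with exactly $n-1$ forced edges forming a spanning tree.
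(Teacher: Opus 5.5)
Your proposal is correct and follows the paper's route: add an authorized arc and its reverse (Lemmas~\ref{lemma:find} and~\ref{lemma:authorized}), label the resulting forced-edge tree via Lemma~\ref{lemma:tempo}, and justify NO answers with Proposition~\ref{prop:3-1}. You also spell out what the paper leaves implicit: termination (forced edges lie in every tree representation, so they form a forest), that the final forced tree is the unique tree representation of $R^*$, and an SCC-based polynomial test for authorized arcs. Your loop guard differs from that of Algorithm~\ref{alg:tree_sol}, but the two coincide on walk-Helly inputs. Once the forced edges span a tree, no authorized arc outside $R^*$ can remain, because adding it and its reverse would give a walk-Helly graph with $n$ forced edges.
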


\begin{algorithm}[h]
	\caption{}
	\label{alg:tree_sol}
	\begin{algorithmic}[1]
	\Require A strongly connected request graph $R$
	\Ensure NO if there is no tree solution for MinCRS, a tree solution otherwise
	
	\While{there exists an authorized arc $(u,v)$ that is not in $R$}
	
	    \State Add $(u,v)$ and $(v,u)$ to $R$ (if not already in it) 
	
	\EndWhile
	
	\If{there are $n-1$ forced edges in $R$ forming a tree $T$}
	    \State \textsc{Labelisation}$(R)$
	\Else 
	    \State Output NO
	\EndIf 
	
\end{algorithmic}
\end{algorithm}
\begin{proof}
Let $R$ be a strongly connected request graph that is walk-Helly. If there are $n-1$ forced edges, then Lemma~\ref{lemma:tempo} shows that we can assign appearance times to the edges of the tree formed by the forced edges so that it is a tree solution for MinCRS. If there are not yet $n-1$ forced edges, we can find an authorized arc $(u,v)$ not in $R$ by Lemma~\ref{lemma:find}, and add the arcs $(u,v)$ and $(v,u)$ in $R$ while maintaining the walk-Helly property by Lemma~\ref{lemma:authorized}. 
Note that if $R$ is not walk-Helly, then there exists no tree solution for MinCRS by Proposition~\ref{prop:3-1} and the algorithm will return NO either because there are not $n-1$ forced edges after adding the authorized arcs to $R$, or because the labelisation will fail. 
Thus, Algorithm~\ref{alg:tree_sol} computes a tree solution if it exists in polynomial time. 
\end{proof}

\section{Conclusion}\label{sec:conclusion}

The problem of Connectivity Request Satisfaction that we introduced turned out to be very intriguing and our work raise a lot of open questions. We would like to conclude this paper by presenting some of them that we believe to be of strong interest:
\begin{itemize}
\item Our results in the undirected case hold when the request graph is strongly connected. Thus, the complexity of determining if there exists a tree solution for non-strongly connected instances of MinCRS is still open. Note that all the connected components of a reachability graph have to be strongly connected (since all the edges of the footprint of a temporal graph involve a pair of opposite arc in its reachability graph). Thus, if there exists a tree solution for MinCRS, its reachability graph has to be walk-Helly (and has to contain all the arcs of the request graph). Thus, whether a non-strongly connected instance admits a tree-solution is equivalent to whether we can make it strongly connected by adding arcs to it while keeping it walk-Helly.
\item One of the main question is also the complexity of MinCRS in the general undirected case. While we strongly believe the problem to be NP-complete, this is not a direct consequence of any of the results we were able to achieve. It would also be interesting to connect this problem to already known structural graph parameters like we were able to do with the directed feedback vertex set in the directed case.
\item We know that any connected request graph requires at least $n-1$ edges to be satisfied and we tried to characterize the request graphs for which $n-1$ is enough. On the opposite end, we know that $2n-4$ edges are enough to satisfy any request graph and it would be interesting to characterize the graphs for which it is necessary. This would also probably be an interesting result in gossip theory.
\item Finally, we conjecture that every instance of MinCRS admits a simple optimal solution i.e. an optimal solution where an edge between the same two vertices cannot appear at several different times. This is clear when we look for a solution of cost $n-cc$ and is also proven true when the optimal solution has $2n-4$ edges but is not known for intermediate values. Being able to restrict the search to simple solutions would probably be very helpful for future work on this topic.
\end{itemize}

\bibliography{ref}

@article{flament1978,
title = {Hypergraphes arborés},
journal = {Discrete Mathematics},
volume = {21},
number = {3},
pages = {223-227},
year = {1978},
author = {Claude Flament},
}

@article{DFVSFPT,
  author       = {Jianer Chen and
                  Yang Liu and
                  Songjian Lu and
                  Barry O'Sullivan and
                  Igor Razgon},
  title        = {A fixed-parameter algorithm for the directed feedback vertex set problem},
  journal      = {J. {ACM}},
  volume       = {55},
  number       = {5},
  pages        = {21:1--21:19},
  year         = {2008},
}

@misc{difonzo2025,
      title={Hypertrees and their host trees: a survey}, 
      author={Pablo De Caria Di Fonzo},
      year={2025},
      eprint={2504.15570},
      archivePrefix={arXiv},
}

@inproceedings{Karp1972,
  author       = {Richard M. Karp},
  title        = {Reducibility Among Combinatorial Problems},
  booktitle    = {Proceedings of a symposium on the Complexity of Computer Computations},
  series       = {The {IBM} Research Symposia Series},
  pages        = {85--103},
  year         = {1972},
}

@article{Hajnal1972, title={A Cure for the Telephone Disease}, volume={15}, DOI={10.4153/CMB-1972-081-0}, number={3}, journal={Canadian Mathematical Bulletin}, author={Hajnal, A. and Milner, E. C. and Szemerédi, E.}, year={1972}, pages={447–450}}

@article{Bumby1981,
  title={A Problem with Telephones},
  author={R. Bumby},
  journal={Siam Journal on Algebraic and Discrete Methods},
  year={1981},
  volume={2},
  pages={13-18}
}

@article{Hedetniemi1988,
author = {Hedetniemi, Sandra M. and Hedetniemi, Stephen T. and Liestman, Arthur L.},
title = {A survey of gossiping and broadcasting in communication networks},
journal = {Networks},
volume = {18},
number = {4},
pages = {319-349},
year = {1988}
}

@article{Gobel1991,
author = {G\"{o}bel, F. and Cerdeira, J. Orestes and Veldman, H. J.},
title = {Label-connected graphs and the gossip problem},
year = {1991},
volume = {87},
number = {1},
journal = {Discrete Math.},
pages = {29–40},
}

@article{dreyfus1969,
 author = {Stuart E. Dreyfus},
 journal = {Operations Research},
 number = {3},
 pages = {395--412},
 title = {{An Appraisal of Some Shortest-Path Algorithms}},
 volume = {17},
 year = {1969}
}

@ARTICLE{wu2016,
  author={Wu, Huanhuan and Cheng, James and Ke, Yiping and Huang, Silu and Huang, Yuzhen and Wu, Hejun},
  journal={IEEE Transactions on Knowledge and Data Engineering}, 
  title={{Efficient Algorithms for Temporal Path Computation}}, 
  year={2016},
  volume={28},
  number={11},
  pages={2927-2942},
}

@article{buixuan2003,
  author       = {Binh-Minh Bui-Xuan and
                  Afonso Ferreira and
                  Aubin Jarry},
  title        = {{Computing Shortest, Fastest, and Foremost Journeys in Dynamic Networks}},
  journal      = {Int. J. Found. Comput. Sci.},
  volume       = {14},
  number       = {2},
  pages        = {267--285},
  year         = {2003},
}

@article{othon2016,
title = {Traveling salesman problems in temporal graphs},
journal = {Theoretical Computer Science},
volume = {634},
pages = {1-23},
year = {2016},
author = {Othon Michail and Paul G. Spirakis},
}

@article{bumpus2023,
author = {Bumpus, Benjamin Merlin and Meeks, Kitty},
title = {Edge Exploration of Temporal Graphs},
year = {2022},
volume = {85},
number = {3},
journal = {Algorithmica},
pages = {688–716},
}

@inproceedings{Bellitto2026,
  author       = {Thomas Bellitto and
                  Johanne Cohen and
                  Bruno Escoffier and
                  Minh{-}Hang Nguyen and
                  Mika{\"{e}}l Rabie},
  title        = {Canadian Traveler Problems in Temporal Graphs},
  booktitle    = {Graph-Theoretic Concepts in Computer Science},
  series       = {Lecture Notes in Computer Science},
  pages        = {33--47},
  year         = {2025},
}

@article{erdos1960,
  title={Graphs with prescribed degrees of vertices},
  author={Erd{\H{o}}s, Paul and Gallai, Tibor},
  journal={Mat. Lapok},
  volume={11},
  pages={264--274},
  year={1960}
}

@article{hakimi1965,
  title={Distance matrix of a graph and its realizability},
  author={S. Louis Hakimi and S. S. Yau},
  journal={Quarterly of Applied Mathematics},
  year={1965},
  volume={22},
  pages={305-317},
}

@article{klobas2024,
title = {Temporal graph realization from fastest paths},
author = {Nina Klobas and George B. Mertzios and Hendrik Molter and Paul G. Spirakis},
journal = {Theoretical Computer Science},
volume = {1056},
pages = {115508},
year = {2025},
}

@InProceedings{erlebach2024,
  author =	{Erlebach, Thomas and Morawietz, Nils and Wolf, Petra},
  title =	{{Parameterized Algorithms for Multi-Label Periodic Temporal Graph Realization}},
  booktitle =	{3rd Symposium on Algorithmic Foundations of Dynamic Networks (SAND 2024)},
  pages =	{12:1--12:16},
  series =	{Leibniz International Proceedings in Informatics (LIPIcs)},
  year =	{2024},
  volume =	{292},
}

@inproceedings{mertzios2025,
  author       = {George B. Mertzios and
                  Hendrik Molter and
                  Nils Morawietz and
                  Paul G. Spirakis},
  title        = {Realizing Temporal Transportation Trees},
  booktitle    = {Graph-Theoretic Concepts in Computer Science - 51st International
                  Workshop, {WG} 2025, Otzenhausen, Germany, June 11-13, 2025, Revised
                  Selected Papers},
  series       = {Lecture Notes in Computer Science},
  pages        = {390--404},
  year         = {2025},
}

@inproceedings{mertzios2025-2,
  author       = {George B. Mertzios and
                  Hendrik Molter and
                  Nils Morawietz and
                  Paul G. Spirakis},
  title        = {Temporal Graph Realization with Bounded Stretch},
  booktitle    = {50th International Symposium on Mathematical Foundations of Computer
                  Science, {MFCS} 2025, Warsaw, Poland, August 25-29, 2025},
  series       = {LIPIcs},
  pages        = {75:1--75:19},
  year         = {2025},
}

@inproceedings{erlebach2025,
  author       = {Thomas Erlebach and
                  Othon Michail and
                  Nils Morawietz},
  title        = {Recognizing and Realizing Temporal Reachability Graphs},
  booktitle    = {33rd Annual European Symposium on Algorithms, {ESA} 2025, Warsaw,
                  Poland, September 15-17, 2025},
  series       = {LIPIcs},
  pages        = {93:1--93:18},
  year         = {2025},
}

@InProceedings{meusel2025,
  author =	{Meusel, Julia and M\"{u}ller-Hannemann, Matthias and Reinhardt, Klaus},
  title =	{{Directed Temporal Tree Realization for Periodic Public Transport: Easy and Hard Cases}},
  booktitle =	{25th Symposium on Algorithmic Approaches for Transportation Modelling, Optimization, and Systems (ATMOS 2025)},
  pages =	{3:1--3:22},
  series =	{Open Access Series in Informatics (OASIcs)},
  year =	{2025},
  volume =	{137},
}

@inproceedings{cauvi2025,
  author       = {Justine Cauvi and
                  Nils Morawietz and
                  Laurent Viennot},
  title        = {Foremost, Fastest, Shortest: Temporal Graph Realization Under Various
                  Path Metrics},
  booktitle    = {43rd International Symposium on Theoretical Aspects of Computer Science,
                  {STACS} 2026, Grenoble, France, March 9-13, 2026},
  series       = {LIPIcs},
  pages        = {24:1--24:19},
  year         = {2026},
}

@InProceedings{casteigts2025,
  author =	{Casteigts, Arnaud and D\"{o}ring, Michelle and Morawietz, Nils},
  title =	{{Realization of Temporally Connected Graphs Based on Degree Sequences}},
  booktitle =	{36th International Symposium on Algorithms and Computation (ISAAC 2025)},
  pages =	{17:1--17:18},
  series =	{Leibniz International Proceedings in Informatics (LIPIcs)},
  year =	{2025},
  volume =	{359},
}

@article{akrida2017,
  author       = {Eleni C. Akrida and
                  Leszek Gasieniec and
                  George B. Mertzios and
                  Paul G. Spirakis},
  title        = {The Complexity of Optimal Design of Temporally Connected Graphs},
  journal      = {Theory Comput. Syst.},
  volume       = {61},
  number       = {3},
  pages        = {907--944},
  year         = {2017},
}

@article{klobas2024-2,
  author       = {Nina Klobas and
                  George B. Mertzios and
                  Hendrik Molter and
                  Paul G. Spirakis},
  title        = {The complexity of computing optimum labelings for temporal connectivity},
  journal      = {J. Comput. Syst. Sci.},
  volume       = {146},
  pages        = {103564},
  year         = {2024},
}

@inproceedings{mertzios2013,
  author       = {George B. Mertzios and
                  Othon Michail and
                  Ioannis Chatzigiannakis and
                  Paul G. Spirakis},
  title        = {Temporal Network Optimization Subject to Connectivity Constraints},
  booktitle    = {Automata, Languages, and Programming - 40th International Colloquium,
                  {ICALP} 2013, Riga, Latvia, July 8-12, 2013, Proceedings, Part {II}},
  series       = {Lecture Notes in Computer Science},
  volume       = {7966},
  pages        = {657--668},
  year         = {2013},
}

@article{enright2021,
  author       = {Jessica A. Enright and
                  Kitty Meeks and
                  George B. Mertzios and
                  Viktor Zamaraev},
  title        = {Deleting edges to restrict the size of an epidemic in temporal networks},
  journal      = {J. Comput. Syst. Sci.},
  volume       = {119},
  pages        = {60--77},
  year         = {2021},
}

@article{deligkas2022,
  author       = {Argyrios Deligkas and
                  Igor Potapov},
  title        = {Optimizing reachability sets in temporal graphs by delaying},
  journal      = {Inf. Comput.},
  volume       = {285},
  number       = {Part},
  pages        = {104890},
  year         = {2022},
}

@article{Kempe2002,
  author       = {David Kempe and
                  Jon M. Kleinberg and
                  Amit Kumar},
  title        = {Connectivity and Inference Problems for Temporal Networks},
  journal      = {J. Comput. Syst. Sci.},
  volume       = {64},
  number       = {4},
  pages        = {820--842},
  year         = {2002},
}

@InProceedings{kurita_et_al:LIPIcs.SAND.2025.9,
  author =	{Kurita, Kazuhiro and Marino, Andrea and Schoeters, Jason and Uno, Takeaki},
  title =	{{Spanner Enumeration for Temporal Graphs}},
  booktitle =	{4th Symposium on Algorithmic Foundations of Dynamic Networks (SAND 2025)},
  pages =	{9:1--9:21},
  series =	{Leibniz International Proceedings in Informatics (LIPIcs)},
  year =	{2025},
  volume =	{330},
}

@InProceedings{bellitto_et_al:LIPIcs.SAND.2025.3,
  author =	{Bellitto, Thomas and Bouton Popper, Jules and Escoffier, Bruno},
  title =	{{Temporal Connectivity Augmentation}},
  booktitle =	{4th Symposium on Algorithmic Foundations of Dynamic Networks (SAND 2025)},
  pages =	{3:1--3:16},
  series =	{Leibniz International Proceedings in Informatics (LIPIcs)},
  year =	{2025},
  volume =	{330},
}

\end{document}